\newcommand{\bl}[1]{\textcolor{blue}{#1}}
\definecolor{mypurple}{rgb}{.4,.0,.5}
\def\y{{\bf y}}
\def\x{{\bf x}}
\def\x{{\mathbf x}}
\def\u{{\bf u}}
\def\x{{\bf x}}
\def\y{{\bf y}}
\def\q{{\bf q}}
\def\m{{\bf m}}
\def\c{{\bf c}}
\def\h{{\bf h}}
\def\cH{{\mathcal H}}
\def\be{\begin{equation}}
\def\ee{\end{equation}}
\def\ba{\left[\begin{array}}
\def\ea{\end{array}\right]}
\def\u{{\bf u}}
\def\x{{\bf x}}
\def\y{{\bf y}}
\def\q{{\bf q}}
\def\c{{\bf c}}
\def\1{{\bf 1}}
\def\0{{\bf 0}}
\def\erf{\mbox{erf}}
\def\erfc{\mbox{erfc}}
\def\calX{{\cal X}}
\def\mR{{\mathbb R}}
\def\mN{{\mathbb N}}
\def\mE{{\mathbb E}}
\def\mB{{\mathbb B}}
\def\lp{\left (}
\def\rp{\right )}
\def\y{{\bf y}}
\def\x{{\bf x}}
\def\x{{\mathbf x}}
\def\u{{\bf u}}
\def\x{{\bf x}}
\def\y{{\bf y}}
\def\q{{\bf q}}
\def\c{{\bf c}}
\def\h{{\bf h}}
\def\cH{{\cal H}}
\def\be{\begin{equation}}
\def\ee{\end{equation}}
\def\ba{\left[\begin{array}}
\def\ea{\end{array}\right]}
\def\u{{\bf u}}
\def\x{{\bf x}}
\def\y{{\bf y}}
\def\q{{\bf q}}
\def\c{{\bf c}}
\def\({\left (}
\def\){\right )}
\def\1{{\bf 1}}
\def\m{{\bf m}}
\def\q{{\bf q}}
\def\0{{\bf 0}}
\def\cX{{\mathcal X}}
\definecolor{darkgreen}{rgb}{0, 0.4,0}
\definecolor{purplebrown}{rgb}{0.5,0.1,0.6}
\definecolor{ultclupcol}{rgb}{0.1,0.5,0.5}
\definecolor{mytrycolor}{rgb}{0.5,0.7,0.2}
\definecolor{ultclupcola}{rgb}{.5,0,.5}
\definecolor{shadebrown}{rgb}{0.1,0.1,0.9}
\definecolor{lightblue}{rgb}{0.2,0,1}
\newtcbox{\xmybox}{on line,
arc=7pt,
before upper={\rule[-3pt]{0pt}{10pt}},boxrule=0pt,
boxsep=0pt,left=6pt,right=6pt,top=0pt,bottom=0pt,enhanced, coltext=blue, colback=white!10!yellow}
\newtcbox{\xmyboxa}{on line,
arc=7pt,
before upper={\rule[-3pt]{0pt}{10pt}},boxrule=0pt,
boxsep=0pt,left=6pt,right=6pt,top=0pt,bottom=0pt,enhanced, colback=white!10!yellow}
\newtcbox{\xmyboxb}{on line,
arc=7pt,
before upper={\rule[-3pt]{0pt}{10pt}},boxrule=1pt,colframe=darkgreen!100!blue,
boxsep=0pt,left=6pt,right=6pt,top=0pt,bottom=0pt,enhanced, colback=white!10!yellow}
\newtcbox{\xmyboxc}{on line,
arc=7pt,
before upper={\rule[-3pt]{0pt}{10pt}},boxrule=.7pt,colframe=blue!100!blue,
boxsep=0pt,left=6pt,right=6pt,top=0pt,bottom=0pt,enhanced, coltext=blue, colback=white!10!yellow}
\newtcbox{\xmytboxa}{on line,
arc=7pt,
before upper={\rule[-3pt]{0pt}{10pt}},boxrule=.0pt,colframe=pink!50!yellow,
boxsep=0pt,left=6pt,right=6pt,top=0pt,bottom=0pt,enhanced, coltext=white, colback=blue!40!red}
\newtcbox{\xmytboxb}{on line,
arc=7pt,
before upper={\rule[-3pt]{0pt}{10pt}},boxrule=.0pt,colframe=pink!50!yellow,
boxsep=0pt,left=6pt,right=6pt,top=0pt,bottom=0pt,enhanced, coltext=white, colback=white!40!green}
\newcommand\subsubsubsection{\@startsection{paragraph}{4}{\z@}{-2.5ex\@plus -1ex \@minus -.25ex}{1.25ex \@plus .25ex}{\normalfont\normalsize\bfseries}}
\newcommand\subsubsubsubsection{\@startsection{subparagraph}{5}{\z@}{-2.5ex\@plus -1ex \@minus -.25ex}{1.25ex \@plus .25ex}{\normalfont\normalsize\bfseries}}
\newtheorem{theorem}{Theorem}
\begin{document}

\begin{singlespace}

\title {A CLuP algorithm to practically achieve $\sim 0.76$ SK--model ground state free energy 
}
\author{
\textsc{Mihailo Stojnic
\footnote{e-mail: {\tt flatoyer@gmail.com}} }}
\date{}
\maketitle

\centerline{{\bf Abstract}} \vspace*{0.1in}

We consider algorithmic determination of the $n$-dimensional  Sherrington-Kirkpatrick (SK) spin glass model ground state free energy. It corresponds to  a binary maximization of an indefinite quadratic form and under the \emph{worst case} principles of the classical NP complexity theory it is hard to approximate within a $\log(n)^{const.}$ factor. On the other hand, the SK's random nature allows (polynomial) spectral methods to \emph{typically} approach the optimum within a constant factor. Naturally one is left with the fundamental question: can the residual (constant) \emph{computational gap} be erased?

Following the success of \emph{Controlled Loosening-up} (CLuP) algorithms in planted models, we here devise a simple practical CLuP-SK algorithmic procedure for (non-planted) SK models. To analyze the \emph{typical} success of the algorithm we associate to it (random) CLuP-SK  models. Further connecting to recent  random processes  studies \cite{Stojnicnflgscompyx23,Stojnicsflgscompyx23}, we characterize the models and CLuP-SK algorithm
via  fully lifted random duality theory (fl RDT) \cite{Stojnicflrdt23}. Moreover, running the algorithm we demonstrate  that its performance is in an excellent agrement with theoretical predictions. In particular, already for $n$  on the order of a few thousands CLuP-SK achieves $\sim 0.76$ ground state free energy and remarkably closely approaches theoretical $n\rightarrow\infty$ limit $\approx 0.763$. For all practical purposes,  this renders computing SK model's near ground state free energy as a \emph{typically} easy problem.

\vspace*{0.25in} \noindent {\bf Index Terms: SK model; CLuP algorithm; Fully lifted random duality theory}.

\end{singlespace}

\section{Introduction}
\label{sec:back}

For a given integer $n\in\mN$, consider the following set of vertices of an $n$-dimensional ``binary'' cube
\begin{eqnarray}\label{eq:inteq2}
\mB^n \triangleq \left \{ \x| \x\in\mR^n, \x_i^2=\frac{1}{n}\right \}.
\end{eqnarray}
Taking $G\in\mR^{n\times n}$ we are interested in the optimization of the associated (indefinite) quadratic form
\begin{eqnarray}\label{eq:inteq1}
 \max_{\x\in\mB^n} \x^TG\x.
\end{eqnarray}
Within the classical NP complexity theory, this is one of the foundational problems. In such a context, it is even hard to approximate it within a $\log(n)^{const.}$ factor \cite{AroraBKSH05} (see, also \cite{CharikarW04,Meg01} for further algorithmic considerations including integer constraints relaxations). While NP concepts are widely believed to be true, they often remain powerless when it comes to proper assessment of \emph{typical} solvability. The reason is their \emph{worst case} (usually deterministic) nature which predominantly relies on instances that may be far away from \emph{typical} problem structures.

Throughout the paper, we focus on a \emph{random} variant of (\ref{eq:inteq1}), where $G$ is comprised of independent standard normals. This variant corresponds to the ground state energy of the celebrated Sherrington-Kirkpatrick (SK) model. In the mid seventies of the last century, the SK model was proposed in  \cite{SheKir72}  as  a long range spin interactions antipode to the nearest neighbor (Edwards-Anderson) spin-glass model  \cite{EdwAnd75}. Intensive studying over several ensuing decades uncovered many of its remarkable properties \cite{Par79,Par80,Par83,Parisi80,SheKir72,Talbook11a,Talbook11b,Pan10,Pan10a,Pan13,Pan13a,Pan14,Guerra03,Tal06}.  Beyond fascinating, exactly half a century later, it still remains among the handful most widely studied statistical mechanics concepts  with applications in a plethora od scientific fields ranging from physics, mathematics, and probability \cite{Talbook11a,Talbook11b,Pan10,Pan10a,Pan13,Pan13a,Pan14,Guerra03,Tal06,StojnicRegRndDlt10,Stojnicgscompyx16,Stojnicgscomp16,SchTir03},  to information theory and signal processing \cite{BayMon10,BayMon10lasso,StojnicCSetam09,StojnicGenLasso10}, and reaching to computer science, machine leaning, and neural networks \cite{Moll12,CojOgh14,MezParZec02,MerMezZec06,FriWor05,AchPer04,AlmSor02,Wast09,LinWa04,Talbook11a,Talbook11b,NaiPraSha05,CopSor99,Ald01}.

Many of the applications are seemingly even unrelated to SK models as the underlying problems do not have the form given in (\ref{eq:inteq1}). It is however, the insights gained through studies and applications of SK models that inspired  such applications. One typically utilizes  statistical/experimental physics postulates to uncover and explain natural phenomena and by doing so provides valuable intuitions for mathematically rigorous studies and concrete practical implementations. This fruitful synergistic approach repeats itself as a common denominator in the above mentioned areas allowing even further branching out and studying many different topics within each of them. In a pool of diverse examples we mention  spherical \cite{GarDer88,Gar88,SchTir03,StojnicGardGen13,Tal05,Talbook11a,Talbook11b,Wendel,Winder,Winder61,Cover65,DonTan09Univ,StojnicGardSphNeg13,FraPar16,FraSclUrb19,AlaSel20}
 and binary \cite{GarDer88,Gar88,BaldMPZ23,StojnicGardGen13,BMPZ23,AbbLiSly21b,PerkXu21,AbbLiSly21a,AubPerZde19,GamKizPerXu22,DingSun19,Huang24,Stojnicbinperflrdt23} perceptrons in machine learning/neural nets; statistical regressions and compressed sensing in  information theory and signal processing  \cite{StojnicUpper10,StojnicCSetam09,StojnicGenLasso10,BayMon10,BayMon10lasso,DonohoPol}; and satisfiability \cite{CojOgh14,DinSlySun14,MezParZec02,MerMezZec06,FriWor05,AchPer04} and assignment/matching \cite{AlmSor02,Wast12,Wast09,TalAssignment03,Talbook11a,Talbook11b,NaiPraSha05,CopSor99,Ald01,Ald92,MezPar87} problems in algorithmic computer science and optimization theory. Moreover, the above thinking pattern has sufficiently matured over the years that in many instances it is utilized even if there is no concrete natural phenomena to which underlying statistical mechanics practically relates.

In this paper, we are interested in a slightly different aspect of SK models. Namely, instead of focusing on widening applicability horizons, we focus on core SK model discoveries. As stated earlier, we look at algorithmic properties with a particular emphasis on simple practically realizable implementations. Before we get to relevant technical discussions we revisit some of the key milestones achieved while studying SK models over the last 50 years (an interesting recent perspective of this type by the inventors of the SK model themselves can be found in \cite{SheKir25}).

\section{SK model --- mathematical preliminaries and prior work}
\label{sec:skmodel}

The SK form given in (\ref{eq:inteq1}) comes as a special case of a more general statistical mechanics concept. Namely, one starts with the so-called Hamiltonian
\begin{equation}
\cH(G)=\sum_{i=1}^{n}\sum_{j=1}^{n}  A_{ij}\x_i\x_j,\label{eq:ham}
\end{equation}
where in the SK scenario
\begin{equation}
A_{ij}(G)= G_{ij},\label{eq:hamAij}
\end{equation}
are the so-called quenched interactions. A simple combination  of (\ref{eq:ham}) and (\ref{eq:hamAij}) gives
\begin{equation}
\cH(G)= \x^TG\x.\label{eq:a0ham1}
\end{equation}
Slightly differently from a common literature practice, the diagonal elements are included as well. Such inclusion facilitates writing without significantly affecting any of the analytical considerations.  The  thermodynamic limit  (or, mathematically speaking, $n\rightarrow\infty$ linear regime) is of our prevalent interest. As stated earlier, we consider random $G$ with independent standard normal entries. One then has for partition function
\begin{equation}
Z(\beta,G)=\sum_{\x\in\mB^n}e^{\beta\cH(G)},\label{eq:a0partfun}
\end{equation}
where $\beta>0$ is the so-called  \emph{inverse temperature} parameter. Given that $Z(\beta,G)$ is random one often considers the thermodynamic limit (average) free energy
\begin{equation}
f_{sq}(\beta)=\lim_{n\rightarrow\infty}\frac{\mE_G\log{(Z(\beta,G)})}{\beta \sqrt{2n}}
=\lim_{n\rightarrow\infty} \frac{\mE_G\log{(\sum_{\x\in\mB^n} e^{\beta\cH(G)})}}{\beta \sqrt{2n}},\label{eq:a0logpartfunsqrt}
\end{equation}
with $\mE_G$ denoting the expectation with respect to $G$ (the adopted convention throughout the paper is that the subscript next to $\mE$ denotes the randomness with respect to which the expectation is evaluated). We are now in position to formally recognize the earlier mentioned  ground state free energy. Namely, in addition to $n\rightarrow\infty$, one also allows so-called zero-temperature limit,
$\beta\rightarrow\infty$, and writes
\begin{equation}
f_{sq}(\infty)   \triangleq  \lim_{\beta\rightarrow\infty}f_{sq}(\beta)  =
\lim_{\beta,n\rightarrow\infty}\frac{\mE_G\log{(Z(\beta,G)})}{\beta \sqrt{2n}}=
 \lim_{n\rightarrow\infty}\frac{\mE_G \max_{\x\in\mB^n} \cH(G)}{\sqrt{2n}}  = \lim_{n\rightarrow\infty}\frac{\mE_G \max_{\x\in\mB^n}\x^TG\x}{\sqrt{2n}}.
  \label{eq:a0limlogpartfunsqrt}
\end{equation}

Our main focus are precisely the above ground state regime and the corresponding free energy (\ref{eq:a0limlogpartfunsqrt}). However, we work with the general form of the free energy given in (\ref{eq:a0logpartfunsqrt}) and  eventually deduce the ground state behavior as a special case.

\subsection{Prior work milestones}
\label{sec:skmodelkey}

\noindent \underline{\emph{\textbf{Theoretical limits:}}} After the nearest neighbor (Edwards-Anderson) spin-glass model was proposed in \cite{EdwAnd75}, its a long range SK antipode followed in  \cite{SheKir72}. The replica method of \cite{EdwAnd75} strengthen by the s-called replica symmetry (RS) ansatz was used to analyze both models. \cite{SheKir72}  observed a negative entropy crisis (unallowable in discrete models) which signalled problems with the proposed methodology. At the time the origin of the problem was not known and could be attributed to both the replica methodology as a whole and/or the symmetry ansatz as its a particular component. Moreover, the ensuing numerical simulations of the ground state energy \cite{SheKir78} significantly deviated from the RS prediction. A potential mismatch resolution arrived with  Parisi's breakthrough discovery of the replica symmetry breaking (RSB) scheme \cite{Par79,Par80,Parisi80}. Early calculations showed that RSB  is capable of both neutralizing the negative entropy and correcting the free energy values. The corrections turned out to be more pronounced for larger $\beta$. Consequently, the ground state $\beta\rightarrow\infty$ regime was affected the most with RS prediction $f_{sq}(\infty)\approx 0.7979$ lowered to $f_{sq}(\infty)\approx 0.7636$  by the Parisi RSB  with two steps of breaking. This was also in a much better agreement with $\sim 0.75\pm 0.01$ estimate obtained through an interpolation of the numerical simulations in \cite{SheKir78}. All of this strengthen belief in the soundness of Parisi RSB and already throughout the eighties of the last century it became a golden standard in utilization of replica methods. It allowed studying many other properties and extensions to previously unfathomably wide range of applications.

Despite producing results that were widely believed to be true, mathematically rigorous confirmations of Parisi's predictions lagged behind. About 25 years after the original Parisi invention,  Guerra  \cite{Guerra03} and Talagrand  \cite{Tal06} proved that the RSB characterization of the SK model is indeed correct. Panchenko then developed his own proof \cite{Pan10,Pan10a,Pan13,Pan13a} which was powerful enough to additionally establish the validity of the \emph{ultrametricity} property precisely as predicted earlier by Parisi \cite{Par83}. Many extensions followed as well (see, e.g., \cite{AuffC15,AuffCZ20,JagTob16} and references therein). All of the above effectively settled the SK model theoretical aspects and established validity of all key Parisi's predictions.

\noindent \underline{\emph{\textbf{Algorithmic aspects:}}}  After determining the theoretical limiting value of the ground state energy, the first next question from the optimization and algorithmic point of view is whether or not a spin configuration that achieves such a value can be efficiently  found. As stated at the beginning, within the NP complexity theory approximating the optimum of an indefinite quadratic form in (\ref{eq:inteq1})  within a $\log(n)^{const.}$ factor  is hard  \cite{AroraBKSH05} (one should note that this is in a strike contrast with the positive semi-definite form, where for  $G\geq 0$ semi-definite relaxations ensure approximation within constant factor of $\frac{2}{\pi}$ \cite{nest97}). Due to their reliance on the  worst case principles, the  NP concepts are usually not a very viable option  to properly assess \emph{typical} solvability. Fairly often it is even not easy to find worst case instances and a meticulous effort is needed to carefully tailor them. Moreover, they are usually deterministic and are either isolated examples or belong to  a class of examples of small size (compared to the size of the set of all possible instances).

A more faithful typical representation is achieved via randomness. While the random structure of $G$ is the intrinsic statistical mechanics feature of the SK model, we should point out two particularly relevant aspects of such structures even in scenarios where the models do not impose randomness per se: \textbf{\emph{(i)}} Compared to the worst case instances, imposing randomness usually emulates in a much better fashion the typical behavior (behavior of a large number or even a majority of all problem instances);  \textbf{\emph{(ii)}} In many scenarios the so-called statistical \emph{universality} property holds which means that for almost all well behaved statistics (say those that can be pushed through the central limit theorem) the results ultimately mimic the ones obtained for standard normal distribution, thereby extending even further emulation of typical behavior.

When it comes to algorithmic aspects of random variants of  (\ref{eq:inteq1}) (i.e., the true SK model) much less was known until fairly recently. One first needs to keep in mind that the bar is set much higher in random settings. As noted in \cite{StojnicHopBnds10},  utilizing the leading eigenvector rounding one can get algorithmic ground state free energy lower bound $\frac{2}{\pi} \approx 0.6366$. The spectral methods give $1$ as an upper bound based on the leading singular values (interestingly, SDP relaxations that match the spectral upper bound can not be improved with higher order analogous relaxation hierarchies \cite{MontanariS16,BandeiraKW20}). All of this implies that, differently from worst case instances, the random ones can be approximated within constant factor in polynomial time. One then naturally wonders, how close to 1 the approximating factor can be?  This, on the other hand, is directly related to the existence of the so-called \emph{computational gaps}. In particular, if the approximative factor can indeed come arbitrarily close to 1, then SK model has no computational gap, which would be another example of a problem that is viewed as hard within the classical NP theory but is in fact typically easy (for more on similar phenomena already known to exist in different problems see, e.g., \cite{Gamar21,GamarSud14,GamarSud17,GamarSud17a} and references therein).

Excluding simple spectral observations not much was known even regarding close derivatives of SK model. In \cite{AbM19}, a CREM (continuous random energy model) was considered and shown to be solvable in polynomial time. While somewhat artificial,  for the purpose of studying algorithmic properties CREM is viewed as sufficiently resembling of SK. In \cite{Subag21}\ Subag designed an efficient algorithm to solve $p$-spin spherical model (a very close associate of SK). A big breakthrough regarding SK model itself arrived with the appearance of Montanari's \cite{Montanari19}. Building on modifications of the message passing algorithms \cite{Kaba03,DonMalMon09,DonohoMM11}, \cite{Montanari19} focused on approximate message passing (AMP)  \cite{DonMalMon09} and considered its an incremental variant, IAMP. Motivated by Subag's spherical SK results, Montanari showed that IAMP approaches arbitrarily closely the (binary/Ising) SK ground state energy provided that the functional relation of Parisi RSB parameters is monotonically increasing (or alternatively that the so-called overlap gap property (OGP) \cite{Gamar21,GamarSud14,GamarSud17,GamarSud17a,AchlioptasCR11,HMMZ08,MMZ05} is absent).  These properties are widely believed to be true with numerical evaluations providing an overwhelming confirmation (a formal proof is still missing, but it is more a technical than conceptual matter). We should also add that \cite{AlaouiMS22} extended the results of \cite{Montanari19} so that they encompass the $p$-spin (binary/Ising) SK models as well. This  in a way completed an analogy with Subag's $p$-spin spherical SK (see, also \cite{AlaouiMS22,AlaouiMS25,HuangS22,Subag24,Subag17,Subag17a} for further results in these and closely related directions).

\noindent \underline{\emph{\textbf{Practical aspects:}}}   The above effectively states that (provided the absence of the OGP) the ground state energy of the SK model --  binary maximization of an  indefinite random (Gaussian) quadratic form -- is computable in polynomial time. In fact, theoretical complexity of IAMP is actually quadratic in $n$. To practically run IAMP, one, however, needs a quick access to Parisi parameters. To what degree the accuracy of Parisi RSB evaluations ultimately impacts concrete practical IAMP implementations remains to be seen.

\subsection{Our contributions}
\label{sec:cont}

\emph{Controlled Loosening-up} (CLuP) algorithmic mechanism was introduced in \cite{Stojnicclupint19,Stojnicclupspreg20}  as a powerful alternative to AMP and other existing state of the art methods for solving \emph{planted} models (two classical applications of binary and  sparse  regression in noisy MIMO ML detection and compressed sensing showcased CLuP's practical  power in \cite{Stojnicclupint19,Stojnicclupspreg20}).  Avoiding some of the AMP's potentially unwanted  features (statistical sensitivity, excessive reliance of planted signal's a priori available knowledge and super large dimensions)  while retaining its accuracy, allowed CLuP to become a desirable alternative for many well known planted models. We here propose a simple CLuP like implementation adapted to fit the SK model and show that it is capable of achieving excellent performance in non-planted scenarios as well.

To do so, for a fixed $r_x$ ($0<r_x\leq 1$) we first recognize the importance of the following associated
\begin{eqnarray}\label{eq:clupskeq1}
\hspace{-1.5in} \mbox{\bl{\textbf{\emph{CLuP-SK model:}}}}  \hspace{.5in}  \max_{\x\in\cX(r_x)} \x^TG\x,
\end{eqnarray}
with
\begin{eqnarray}\label{eq:clupskeq1}
 \cX(r_x) \triangleq \left \{ \x | \x\in\mR^n,\|\x\|_2=r_x,\x_i^2\leq \frac{1}{n}  \right \}.
\end{eqnarray}
Defining corresponding Hamiltonian and partition function
\begin{equation}
\cH_{csk}(G)= \sum_{\x\in\cX(r_x)}\x^TG\x,\label{eq:cska0ham1}
\end{equation}
and
\begin{equation}
Z_{csk}(\beta,G)=\sum_{\x\in\cX(r_x)}e^{\beta\cH_{csk}(G)},\label{eq:cska0partfun}
\end{equation}
we have for the thermodynamic limit (average) free energy
\begin{equation}
f_{csk}(\beta)=\lim_{n\rightarrow\infty}\frac{\mE_G\log{(Z_{csk}(\beta,G)})}{\beta \sqrt{2n}}
=\lim_{n\rightarrow\infty} \frac{\mE_G\log{(\sum_{\x\in\cX(r_x)} e^{\beta\cH_{csk}(G)})}}{\beta \sqrt{2n}}.\label{eq:cska0logpartfunsqrt}
\end{equation}
In the  ground state regime one then has
\begin{eqnarray}
\xi(r_x) \triangleq f_{csk}(\infty)  & \triangleq  & \lim_{\beta\rightarrow\infty}f_{csk}(\beta)  =
\lim_{\beta,n\rightarrow\infty}\frac{\mE_G\log{(Z_{csk}(\beta,G)})}{\beta \sqrt{2n}}
\nonumber \\
& = &
 \lim_{n\rightarrow\infty}\frac{\mE_G \max_{\x\in\cX(r_x)} \cH_{csk}(G)}{\sqrt{2n}}  = \lim_{n\rightarrow\infty}\frac{\mE_G \max_{\x\in\cX(r_x)}\x^TG\x}{\sqrt{2n}}.
  \label{eq:cska0limlogpartfunsqrt}
\end{eqnarray}
In Section \ref{sec:randlincons} we connect the above model to recent progress in studying  random processes \cite{Stojnicsflgscompyx23,Stojnicnflgscompyx23}. Utilizing  \emph{fully lifted} random duality theory (fl RDT) and its a \emph{stationarized}  sfl RDT variant  \cite{Stojnicflrdt23},  we characterize the CLuP-SK model. In Section \ref{sec:algimp} we present the associated CLuP-SK algorithmic procedure and show how the results related to  CLuP-SK and an alternative  $\overline{\mbox{CLuP-SK}}$  model directly translate to such a procedure. An excellent agreement between theoretical predictions and algorithmic performance is observed. In particular, for fairly small dimensions on the order of few thousands the CLuP-SK algorithm practically achieves $\sim 0.76$ ground state free energy.

\section{Connecting CLuP-SK model and sfl RDT}
\label{sec:randlincons}

One first observes  that
\begin{eqnarray}
f_{csk}(\beta) & = &\lim_{n\rightarrow\infty} \frac{\mE_G\log\lp \sum_{\x\in\cX(r_x) }e^{\beta\x^TG\x)}\rp}{\beta \sqrt{2n}},\label{eq:hmsfl1}
\end{eqnarray}
is basically a function of  quirp (quadratically indexed random process) $\x^TG\x$. The machinery of  \cite{Stojnicsflgscompyx23,Stojnicnflgscompyx23} is presented for blirps (bilinearly indexed random processes). However, with very minimal modifications it automatically applies to a much wider range of random processes. In, particular, since quirps  are much simpler than blirps, it immediately applies to them as well. To connect the problems that we study here to \cite{Stojnicsflgscompyx23,Stojnicnflgscompyx23}, we need several technical preliminaries. Consider $r\in\mN$, $k\in\{1,2,\dots,r+1\}$, real scalar $r_x>0$, set $\cX(r_x)\subseteq \mR^n$, and function $f_S(\cdot):\mR^n\rightarrow R$. Let vectors $\q=[\q_0,\q_1,\dots,\q_{r+1}]$ and $\c=[\c_0,\c_1,\dots,\c_{r+1}]$ be such that
 \begin{eqnarray}\label{eq:hmsfl2}
1=\q_0\geq \q_1\geq \q_2\geq \dots \geq \q_r\geq \q_{r+1} & = & 0,
 \end{eqnarray}
$\c_0=1$, $\c_{r+1}=0$, and let ${\mathcal U}_k\triangleq [u^{(4,k)},\u^{(2,k)},\h^{(k)}]$  be such that the elements of  $u^{(4,k)}\in\mR$, $\u^{(2,k)}\in\mR^m$, and $\h^{(k)}\in\mR^n$ are independent standard normals. After setting
  \begin{eqnarray}\label{eq:fl4}
\psi_{S,\infty}(f_{S},\calX,\q,\c,r_x)  =
 \mE_{G,{\mathcal U}_{r+1}} \frac{1}{n\c_r} \log
\lp \mE_{{\mathcal U}_{r}} \lp \dots \lp \mE_{{\mathcal U}_2}\lp\lp\mE_{{\mathcal U}_1} \lp \lp Z_{S,\infty}\rp^{\c_2}\rp\rp^{\frac{\c_3}{\c_2}}\rp\rp^{\frac{\c_4}{\c_3}} \dots \rp^{\frac{\c_{r}}{\c_{r-1}}}\rp,
 \end{eqnarray}
with
\begin{eqnarray}\label{eq:fl5}
Z_{S,\infty} & \triangleq & e^{D_{0,S,\infty}} \nonumber \\
 D_{0,S,\infty} & \triangleq  & \max_{\x\in\cX,\|\x\|_2=r_x}
 \lp f_{S}
+\sqrt{n}  r_x    \lp\sum_{k=2}^{r+1}c_k\h^{(k)}\rp^T\x
+ \sqrt{n} r_x \x^T\lp\sum_{k=2}^{r+1}c_k\u^{(2,k)}\rp \rp \nonumber  \\
 c_k & \triangleq & c_k(\q)=\sqrt{\q_{k-1}-\q_k}.
 \end{eqnarray}
we are in position to recall on the following fundamental sfl RDT result.
\begin{theorem} [\cite{Stojnicflrdt23}]
\label{thm:thmsflrdt1}  Assume large  $n$  regime with $\alpha=\lim_{n\rightarrow\infty} \frac{m}{n}$, remaining constant as  $n$ grows. Let the elements of  $G\in\mR^{m\times n}$ be independent standard normals  and for a scalar $r_x>0$ let $\cX(r_x)\subseteq \mR^n$ be a given set. Assume the complete sfl RDT frame from \cite{Stojnicsflgscompyx23} and let  $f(\x):R^n\rightarrow R$ be a given function. Set
\begin{align}\label{eq:thmsflrdt2eq1}
   \psi_{rp}(r_x) & \triangleq - \max_{\x\in\cX(r_x)}  \lp f(\x)+\x^TG\x \rp
   \qquad  \mbox{(\bl{\textbf{random primal}})} \nonumber \\
   \psi_{rd}(\q,\c,r_x) & \triangleq    \frac{r_x^4}{2}    \sum_{k=2}^{r+1}\Bigg(\Bigg.
   \q^2_{k-1}
   -\q^2_{k}
  \Bigg.\Bigg)
\c_k
  - \psi_{S,\infty}(f(\x),\calX(r_x),\q,\c,r_x) \hspace{.24in} \mbox{(\bl{\textbf{fl random dual}})}.
 \end{align}
Let $\hat{\q_0}\rightarrow 1$,$\hat{\c_0}\rightarrow 1$, $\hat{\q}_{r+1}=\hat{\c}_{r+1}=0$, and let the non-fixed parts of $\hat{\q}\triangleq \hat{\q}(r_x)$ and  $\hat{\c}\triangleq \hat{\c}(r_x)$ be the solutions of the following system
\begin{eqnarray}\label{eq:thmsflrdt2eq2}
    \frac{d \psi_{rd}(\q,\c,r_x)}{d\q} =  0,\quad
   \frac{d \psi_{rd}(\q,\c,r_x)}{d\c} =  0.
 \end{eqnarray}
 Then,
\begin{eqnarray}\label{eq:thmsflrdt2eq3}
    \lim_{n\rightarrow\infty} \frac{\mE_G  \psi_{rp}}{\sqrt{n}}
  & = &
 \lim_{n\rightarrow\infty} \psi_{rd}(\hat{\q}(r_x),\hat{\c}(r_x),r_x) \qquad \mbox{(\bl{\textbf{strong sfl random duality}})},
 \end{eqnarray}
where $\psi_{S,\infty}(\cdot)$ is as in (\ref{eq:fl4})-(\ref{eq:fl5}).
 \end{theorem}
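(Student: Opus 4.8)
\medskip
\noindent The plan is to deduce the statement from the general stationarized fully lifted random duality machinery by viewing $\x^TG\x$ as a (degenerate) bilinearly indexed process and by first passing to a finite inverse temperature. First I would replace the random primal $\psi_{rp}(r_x)=-\max_{\x\in\cX(r_x)}\lp f(\x)+\x^TG\x\rp$ by its softened version $\psi_{rp}^{(\beta)}=-\frac{1}{\beta}\log\int_{\cX(r_x)}e^{\beta\lp f(\x)+\x^TG\x\rp}d\mu(\x)$, where $\mu$ is the natural measure on the sphere-cap set $\cX(r_x)$. Since $\cX(r_x)$ is compact with at most exponential metric entropy, a Laplace/Varadhan estimate gives $\psi_{rp}^{(\beta)}\to\psi_{rp}$ as $\beta\to\infty$ with error controlled uniformly enough in $n$, so it suffices to analyze $\mE_G\psi_{rp}^{(\beta)}/\sqrt{n}$ and then send $\beta\to\infty$ on both sides; the $\beta\to\infty$ limit on the dual side turns the nested moment functional into the $\max$-form, i.e. $Z_{S,\infty}=e^{D_{0,S,\infty}}$, which is exactly the quantity appearing in (\ref{eq:fl4})--(\ref{eq:fl5}).

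The heart of the argument is the $r$-fold nested Gaussian interpolation that underlies the sfl RDT frame. Along a path parametrized by $\q$ with $1=\q_0\geq \q_1\geq\dots\geq \q_{r+1}=0$, one interpolates the centered Gaussian field $\x\mapsto\x^TG\x$ between its original law and the fully decoupled field $\sqrt{n}r_x\lp\sum_{k=2}^{r+1}c_k\h^{(k)}\rp^T\x+\sqrt{n}r_x\x^T\lp\sum_{k=2}^{r+1}c_k\u^{(2,k)}\rp$ with $c_k=\sqrt{\q_{k-1}-\q_k}$ and independent ghost Gaussians $\h^{(k)},\u^{(2,k)}$ (the scalars $u^{(4,k)}$ carrying the bookkeeping of the self-overlap, which is pinned because $\|\x\|_2=r_x$ is fixed on $\cX(r_x)$). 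Differentiating the nested $\log$-moment-generating functional in the interpolation parameter and integrating by parts via Stein's identity produces two contributions: the quadratic term $\frac{r_x^4}{2}\sum_{k=2}^{r+1}\lp \q_{k-1}^2-\q_k^2\rp \c_k$, coming from the diagonal self-overlap pieces, and the nested expectation term $\psi_{S,\infty}(f(\x),\calX(r_x),\q,\c,r_x)$, coming from the decoupled linear field; together they assemble into $\psi_{rd}(\q,\c,r_x)$. Because the weights $\c$ are monotone, the maps $x\mapsto x^{\c_{j+1}/\c_j}$ appearing in the Parisi-type tower are convex, so the derivative along the path has a definite sign; this yields the one-sided comparison $\lim_{n\to\infty}\mE_G\psi_{rp}/\sqrt{n}\geq\lim_{n\to\infty}\psi_{rd}(\q,\c,r_x)$ for every admissible $(\q,\c)$ --- weak sfl random duality --- and hence $\geq$ the value at the stationary point.

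To upgrade this to the equality (\ref{eq:thmsflrdt2eq3}) one must show the bound is saturated precisely at the stationary point $(\hat\q,\hat\c)$ solving (\ref{eq:thmsflrdt2eq2}). The mechanism is that the stationarity equations $d\psi_{rd}/d\q=0$, $d\psi_{rd}/d\c=0$ are exactly the self-consistency (fixed-point) relations that the overlap structure of the tilted Gibbs measure on $\cX(r_x)$ must satisfy; feeding a near-maximizing primal configuration back into the interpolation and using concentration of the relevant overlaps --- where the box constraint $\x_i^2\leq 1/n$ keeps the higher overlap moments uniformly bounded and the norm constraint pins the self-overlap --- gives the matching reverse inequality at $(\hat\q,\hat\c)$. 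Equivalently, since $\psi_{rp}$ here is a functional of the quirp $\x^TG\x$, which is the $\y=\x$ specialization of the blirp $\x^TG\y$ treated in \cite{Stojnicsflgscompyx23,Stojnicnflgscompyx23,Stojnicflrdt23}, the present statement is the corresponding instance of the strong sfl RDT theorem there: the hypotheses bundled into the ``complete sfl RDT frame'' are exactly what guarantee existence, isolation, and the no-gap property of the critical point, and the conclusion follows by that theorem together with the $\beta\to\infty$ passage described above.

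The step I expect to be the genuine obstacle is the tightness (strong) direction rather than the comparison or the temperature limit, which are, once the framework is in place, standard (Gaussian interpolation plus Laplace). Proving that the nested-$\log$ upper bound is not merely an upper bound but is attained --- i.e. that there is no residual gap between $\psi_{rd}(\hat\q,\hat\c,r_x)$ and the true limiting primal --- is where the structural content lives; in full generality it rests on the monotonicity of $\hat\c$, on convexity/concavity of $\psi_{rd}$ along the stationary manifold, and on a uniqueness/isolation property of the critical point, all of which are the substance of \cite{Stojnicflrdt23}. For the concrete CLuP-SK application one would additionally have to verify these structural conditions, either numerically or by leaning on the established Parisi-formula results for the SK model, which is exactly the point at which the present paper invokes Theorem~\ref{thm:thmsflrdt1} as a black box.
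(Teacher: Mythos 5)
Your proposal and the paper's proof land on the same conclusion by the same logical route: the theorem is the quirp (\(\y=\x\)) specialization of the strong sfl RDT theorem of \cite{Stojnicflrdt23}, obtained by the symmetry substitutions \(\y\to\x\), \(y\to x\), \(b_k\to c_k\) and the norm-pinning \(x\to r_x\), and the paper simply cites that machinery rather than re-deriving it. What you add is a detailed unpacking of what sits inside that black box --- softening to finite \(\beta\) and taking \(\beta\to\infty\), the \(r\)-level Gaussian interpolation producing the \(\frac{r_x^4}{2}\sum_k(\q_{k-1}^2-\q_k^2)\c_k\) self-overlap term and the decoupled nested-\(\log\) term, the Guerra-type one-sided comparison, and the saturation at the stationary \((\hat\q,\hat\c)\) --- which is accurate as a summary of the sfl RDT proof architecture and correctly identifies the strong (no-gap) direction as the nontrivial content of \cite{Stojnicflrdt23}; this is consistent with, and a faithful expansion of, the paper's one-line proof.
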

\begin{proof}
Follows after repeating line-by-line derivations in \cite{Stojnicflrdt23,Stojnicnflgscompyx23,Stojnicsflgscompyx23} with cosmetic change $x\rightarrow r_x$ and trivial $\y\rightarrow \x$, $y\rightarrow x$, and  $b_k\rightarrow c_k$ symmetry adjustments.
 \end{proof}

As mentioned in \cite{Stojnicflrdt23}, various  probabilistic variants of (\ref{eq:thmsflrdt2eq3})  hold immediately as well. We skip stating these trivialities and instead focus on practical utilization.

\subsection{Practical utilizations}
\label{sec:prac}

Handling specialization of the above theorem obtained for $f(\x)=0$ and
\begin{equation}\label{eq:prac0a0}
\cX(r_x)=\left \{ \x| \|\x\|_2=r_x, \x_i^2\leq \frac{1}{n}\right \},
\end{equation}
will be the key for everything that follows. Along the same lines, unless otherwise stated, throughout the rest of the paper $\cX$ is assumed to have precisely the form given in (\ref{eq:prac0a0}). One then starts by recognizing that the so-called \emph{random dual} is the key object of practical interest
\begin{align}\label{eq:prac1}
    \psi_{rd}(\q,\c,r_x) & \triangleq    \frac{1}{2}    \sum_{k=2}^{r+1}\Bigg(\Bigg.
   \q^2_{k-1}
   -\q^2_{k}
  \Bigg.\Bigg)
\c_k
  - \psi_{S,\infty}(0,\calX(r_x)\q,\c,r_x). \nonumber \\
  & =   \frac{1}{2}    \sum_{k=2}^{r+1}\Bigg(\Bigg.
   \q^2_{k-1}
   -\q^2_{k}
  \Bigg.\Bigg)
\c_k
  - \frac{1}{n}\varphi(D^{(bin)}(r_x)),
  \end{align}
where similarly to (\ref{eq:fl4})-(\ref{eq:fl5})
  \begin{eqnarray}\label{eq:prac2}
\varphi(D,\c) & = &
 \mE_{G,{\mathcal U}_{r+1}} \frac{1}{\c_r} \log
\lp \mE_{{\mathcal U}_{r}} \lp \dots \lp \mE_{{\mathcal U}_3}\lp\lp\mE_{{\mathcal U}_2} \lp
\lp    e^{D}   \rp^{\c_2}\rp\rp^{\frac{\c_3}{\c_2}}\rp\rp^{\frac{\c_4}{\c_3}} \dots \rp^{\frac{\c_{r}}{\c_{r-1}}}\rp,
  \end{eqnarray}
and
\begin{eqnarray}\label{eq:prac3}
D^{(bin)}(r_x) & = & \max_{\x\in\cX(r_x)} \lp  r_x \sqrt{2n}      \lp\sum_{k=2}^{r+1}c_k\h^{(k)}\rp^T\x  \rp.
 \end{eqnarray}
We then find
\begin{eqnarray}\label{eq:prac4}
D^{(bin)}(r_x) & = & \max_{\x\in\cX(r_x)} \lp r_x  \sqrt{2n}      \lp\sum_{k=2}^{r+1}c_k\h^{(k)}\rp^T\x  \rp
 = - r_x \sum_{i=1}^n D^{(bin)}_i(c_k) +\gamma r_x^3,
 \end{eqnarray}
where
\begin{eqnarray}\label{eq:prac5}
D^{(bin)}_i(c_k)= \begin{cases}
                    -\frac{\lp\sum_{k=2}^{r+1}c_k\h_i^{(k)}\rp ^2}{2\gamma}, & \mbox{if } \left |  \sum_{k=2}^{r+1} c_k \h_i^{(k)}   \right | \leq \sqrt{2}\gamma  \\
                    -\sqrt{2} \left | \sum_{k=2}^{r+1} c_k\h_i^{(k)}   \right | +\gamma, & \mbox{otherwise}.
                  \end{cases}
\end{eqnarray}
Connecting  $f_{csk}$ from (\ref{eq:cska0limlogpartfunsqrt}) and the random primal $\psi_{rp}(r_x)$ from Theorem \ref{thm:thmsflrdt1}, we observe
 \begin{eqnarray}
\xi(r_x) = f_{csk}(\infty)
& = &  \lim_{n\rightarrow\infty}\frac{\mE_G \max_{\x\in\cX(r_x)} \x^TG\x  }{\sqrt{2n}}
=
      -\lim_{n\rightarrow\infty} \frac{\mE_G  \psi_{rp}}{\sqrt{2n}}
   =
 -\frac{1}{\sqrt{2}}\lim_{n\rightarrow\infty} \psi_{rd}(\hat{\q},\hat{\c},r_x),
  \label{eq:prac11}
\end{eqnarray}
with the non-fixed parts of $\hat{\q}$ and  $\hat{\c}$ being the solutions of
\begin{eqnarray}\label{eq:prac12}
    \frac{d \psi_{rd}(\q,\c,r_x)}{d\q} =  0,\quad
   \frac{d \psi_{rd}(\q,\c,r_x)}{d\c} =  0.
 \end{eqnarray}
 Relying on (\ref{eq:prac1})-(\ref{eq:prac5}), we further have
 \begin{eqnarray}
 \lim_{n\rightarrow\infty} \psi_{rd}(\hat{\q},\hat{\c},r_x) =  \bar{\psi}_{rd}(\hat{\q},\hat{\c},\hat{\gamma}_{sq},r_x),
  \label{eq:prac12a}
\end{eqnarray}
where
\begin{eqnarray}\label{eq:prac13}
    \bar{\psi}_{rd}(\q,\c,\gamma_{sq},r_x)   & = &  \frac{1}{2}    \sum_{k=2}^{r+1}\Bigg(\Bigg.
   \q^2_{k-1}
   - \q^2_{k}
  \Bigg.\Bigg)
\c_k
-\gamma r_x^3
- \varphi(D_1^{(bin)}(c_k(\q)),\c).
  \end{eqnarray}
In (\ref{eq:prac13}), $\varphi(D_1^{(bin)}(c_k(\q)),\c)$  is   (based on (\ref{eq:prac2}) and (\ref{eq:prac5})) given by
\begin{align}\label{eq:prac14}
\varphi(D_1^{(bin)}(c_k(\q)),\c) & =
 \mE_{{\mathcal U}_{r+1}} \frac{1}{\c_r} \log
\lp \mE_{{\mathcal U}_{r}} \lp \dots \lp \mE_{{\mathcal U}_3}\lp\lp\mE_{{\mathcal U}_2} \lp
    e^{  -r_x \c_2 D_1^{(bin)}(c_k(\q))  }  \rp\rp^{\frac{\c_3}{\c_2}}\rp\rp^{\frac{\c_4}{\c_3}} \dots \rp^{\frac{\c_{r}}{\c_{r-1}}}\rp, \nonumber \\
  \end{align}
whereas $\hat{\gamma}$ and  the non-fixed parts of $\hat{\q}$, and  $\hat{\c}$ are the solutions of
\begin{eqnarray}\label{eq:prac16}
    \frac{d \bar{\psi}_{rd}(\q,\c,\gamma_{sq},r_x)}{d\q} & = & 0 \nonumber \\
   \frac{d \bar{\psi}_{rd}(\q,\c,\gamma_{sq},r_x)}{d\c} & = & 0 \nonumber \\
   \frac{d \bar{\psi}_{rd}(\q,\c,\gamma_{sq},r_x)}{d\gamma} & = & 0.
 \end{eqnarray}
One then easily observes
\begin{eqnarray}\label{eq:prac17}
c_k(\hat{\q})  & = & \sqrt{\hat{\q}_{k-1}-\hat{\q}_k},
 \end{eqnarray}
and after connecting  (\ref{eq:prac11}) to (\ref{eq:prac12a}) finds
 \begin{eqnarray}
f_{csk}(\infty)
& = &  \lim_{n\rightarrow\infty}\frac{\mE_G \max_{\x\in\cX(r_x)} \x^TG\x}{\sqrt{2n}}  =
 -\frac{1}{\sqrt{2}}\lim_{n\rightarrow\infty} \psi_{rd}(\hat{\q},\hat{\c},r_x)
 = - \frac{1}{\sqrt{2}} \bar{\psi}_{rd}(\hat{\q},\hat{\c},\hat{\gamma},r_x) \nonumber \\
 & = & \frac{1}{\sqrt{2}}\lp  -\frac{1}{2}    \sum_{k=2}^{r+1}\Bigg(\Bigg.
    \hat{\q}^2_{k-1}
   - \hat{\q}^2_{k}
  \Bigg.\Bigg)
\hat{\c}_k
 + \hat{\gamma} r_x^3
  + \varphi(D_1^{(bin)}(c_k(\hat{\q})),\c) \rp. \nonumber \\
  \label{eq:prac18}
\end{eqnarray}
The above results are summarized in the following theorem.

\begin{theorem}
  \label{thme:thmprac1}
Consider linear large $n$  regime with $\alpha=\lim_{n\rightarrow\infty} \frac{m}{n}$ and  assume the complete sfl RDT setup of \cite{Stojnicsflgscompyx23}.  Let $\varphi(\cdot)$ and $\bar{\psi}_{rd}(\cdot)$ be as given in (\ref{eq:prac2}) and (\ref{eq:prac13}), respectively and let the ``fixed'' parts of $\hat{\q}$, and $\hat{\c}$ be $\hat{\q}_1\rightarrow 1$, $\hat{\c}_1\rightarrow 1$, $\hat{\q}_{r+1}=\hat{\c}_{r+1}=0$. Also, let $\hat{\gamma}$ and the ``non-fixed'' parts of $\hat{\q}_k$, and $\hat{\c}_k$ ($k\in\{2,3,\dots,r\}$) be the solutions of (\ref{eq:prac16}). For $c_k(\hat{\q})$  from (\ref{eq:prac17}), one has
 \begin{eqnarray}
\xi(r_x) \triangleq f_{csk}(\infty)
& = &   \frac{1}{\sqrt{2}} \lp -\frac{1}{2}    \sum_{k=2}^{r+1}\Bigg(\Bigg.
    \hat{\q}^2_{k-1}
   - \hat{\q}^2_{k}
  \Bigg.\Bigg)
\hat{\c}_k
+ \hat{\gamma} r_x^3
  + \varphi(D_1^{(bin)}(c_k(\hat{\q})),\hat{\c})   \rp. \nonumber \\
  \label{eq:thmprac1eq1}
\end{eqnarray}
\end{theorem}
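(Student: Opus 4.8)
The plan is to read off Theorem~\ref{thme:thmprac1} as the fully explicit form of Theorem~\ref{thm:thmsflrdt1} in the single relevant special case. Concretely, I would instantiate Theorem~\ref{thm:thmsflrdt1} with $f(\x)=0$, with $\cX(r_x)$ the intersection of the sphere $\|\x\|_2=r_x$ and the box $\{\x_i^2\leq \tfrac1n\}$ from (\ref{eq:prac0a0}), and with $G\in\mR^{n\times n}$ (so that $\x^TG\x$ is well defined; the $m$, $\alpha$ appearing in the statement are inherited from the general bilinear setup and here just mean $m=n$, $\alpha=1$). The strong sfl random duality identity (\ref{eq:thmsflrdt2eq3}) then relates $\mE_G\psi_{rp}(r_x)/\sqrt n$ to $\psi_{rd}(\hat{\q},\hat{\c},r_x)$, and the definition (\ref{eq:cska0limlogpartfunsqrt}) of $\xi(r_x)=f_{csk}(\infty)$, together with the $\beta\to\infty$ and $\sqrt{2n}$ normalizations, produces the link (\ref{eq:prac11}), namely $f_{csk}(\infty)=-\tfrac1{\sqrt2}\lim_{n}\psi_{rd}(\hat{\q},\hat{\c},r_x)$. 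Everything beyond this point is the evaluation of the one object inside $\psi_{rd}$ that is not already in closed form, namely the inner maximization $D^{(bin)}(r_x)$ of (\ref{eq:prac3}).

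To evaluate $D^{(bin)}(r_x)$ the key step is that its objective $\x\mapsto r_x\sqrt{2n}\,\big(\sum_{k=2}^{r+1}c_k\h^{(k)}\big)^T\x$ is \emph{linear} in $\x$, so I would introduce a Lagrange multiplier (this is the eventual optimization variable $\gamma$, up to scaling) for the equality constraint $\|\x\|_2=r_x$ and keep the box $\x_i^2\leq\tfrac1n$ explicit. After this relaxation the maximization decouples completely across the $n$ coordinates into identical scalar programs $\max_{|t|\leq 1/\sqrt n}\big(w_i t-\lambda t^2\big)$ with $w_i=\sum_{k=2}^{r+1}c_k\h_i^{(k)}$, each an elementary concave box-constrained quadratic whose maximizer is the clipped stationary point $t=\mathrm{clip}(w_i/(2\lambda),\pm 1/\sqrt n)$. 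Reading off the two resulting branches gives exactly the piecewise formula (\ref{eq:prac5}) for $D^{(bin)}_i(c_k)$ and the decomposition (\ref{eq:prac4}), $D^{(bin)}(r_x)=-r_x\sum_{i=1}^n D^{(bin)}_i(c_k)+\gamma r_x^3$. Since $\gamma$ is no longer fixed but is pinned down by requiring the sphere constraint to be active (equivalently by stationarity), the optimality system (\ref{eq:prac12}) of Theorem~\ref{thm:thmsflrdt1} acquires one extra equation, which is precisely the third line of (\ref{eq:prac16}).

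It remains to assemble. Substituting (\ref{eq:prac4})--(\ref{eq:prac5}) into $\psi_{S,\infty}$ (which for $f=0$ is just $\tfrac1n\varphi(D^{(bin)}(r_x))$): the deterministic summand $\gamma r_x^3$ factors through every nested expectation and logarithm in (\ref{eq:prac2}), and the product of nested exponents $\c_2\cdot\tfrac{\c_3}{\c_2}\cdots\tfrac{\c_r}{\c_{r-1}}=\c_r$ cancels the prefactor $\tfrac1{\c_r}$, so after the $\tfrac1n$ normalization it contributes exactly $-\gamma r_x^3$ to $\bar{\psi}_{rd}$; the random part $-r_x\sum_i D^{(bin)}_i$ is a sum of $n$ contributions i.i.d.\ over coordinates, and because the nested Gaussian expectations in (\ref{eq:prac2}) factorize over those coordinates, the normalized nested log-moment collapses, as $n\to\infty$, to its single-coordinate version $\varphi(D_1^{(bin)}(c_k(\q)),\c)$ of (\ref{eq:prac14}). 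This yields $\bar{\psi}_{rd}$ of (\ref{eq:prac13}) and the limit (\ref{eq:prac12a}); combining with (\ref{eq:prac11}), the replacement $\sqrt n\to\sqrt{2n}$ that supplies the factor $\tfrac1{\sqrt2}$, the relation (\ref{eq:prac17}) for $c_k(\hat{\q})$, and evaluating at the stationary point $(\hat{\q},\hat{\c},\hat{\gamma})$ produces (\ref{eq:prac18}) and hence (\ref{eq:thmprac1eq1}); the boundary conventions $\hat{\q}_1\to1$, $\hat{\c}_1\to1$, $\hat{\q}_{r+1}=\hat{\c}_{r+1}=0$ are carried over verbatim from Theorem~\ref{thm:thmsflrdt1}.

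The two steps that are not pure bookkeeping are the following. First, one must check that the Lagrangian relaxation of the nonconvex sphere-plus-box program is tight: at the stationary value of $\gamma$ the clipped vector must have $\ell_2$ norm exactly $r_x$ with the correct multiplier sign, so that the relaxed value equals the true constrained maximum; this uses that once $\gamma$ is frozen the relaxed problem is a genuinely concave separable maximization and that $\gamma$ is chosen precisely to reinstate the active constraint. Second, one must justify interchanging $\lim_{n\to\infty}$ with the tower of expectations and the passage from $\tfrac1n\varphi(D^{(bin)})$ to the single-coordinate $\varphi(D_1^{(bin)})$; this is not reproved here but invoked from the concentration apparatus underlying the sfl RDT framework of \cite{Stojnicflrdt23,Stojnicsflgscompyx23,Stojnicnflgscompyx23}, exactly as in the proof of Theorem~\ref{thm:thmsflrdt1}. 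I expect the first of these — pinning down the tightness of the box-constrained sphere relaxation and the sign and activity of $\gamma$ — to be the main obstacle, though a mild one.
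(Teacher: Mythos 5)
Your proposal is correct and follows essentially the same route as the paper: it is a slightly more explicit retelling of the derivation (\ref{eq:prac1})--(\ref{eq:prac18}) in Section~\ref{sec:prac}, which is exactly what the paper's one-line proof ("follows from the above discussion, Theorem~\ref{thm:thmsflrdt1}, and the sfl RDT machinery") invokes. The only additions you make beyond the paper's text --- spelling out the Lagrangian/coordinate-decoupling argument behind (\ref{eq:prac4})--(\ref{eq:prac5}), the cancellation of the nested exponents $\c_2\cdot\frac{\c_3}{\c_2}\cdots\frac{\c_r}{\c_{r-1}}=\c_r$, and the tightness of the sphere-plus-box relaxation --- are details the paper leaves implicit, not a different proof strategy.
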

\begin{proof}
Follows automatically from the above discussion, Theorem \ref{thm:thmsflrdt1}, and the sfl RDT machinery presented in \cite{Stojnicnflgscompyx23,Stojnicsflgscompyx23,Stojnicflrdt23}.
\end{proof}

\subsection{Numerical evaluations}
\label{sec:nuemrical}

Theorem \ref{thme:thmprac1} reaches its full practically relevance if all underlying numerical evaluations cam be conducted. We show next that this can indeed be done. To allow for a systematic exposition, we start with first level of lifting, i.e., with $r=1$.

\underline{1) \textbf{\emph{$r=1$ -- first level of lifting:}}} For $r=1$ we have $\hat{\q}_1\rightarrow 1$ which together with $\hat{\q}_{r+1}=\hat{\q}_{2}=0$, and $\hat{\c}_{2}\rightarrow 0$ allows to write
\begin{align}\label{eq:prac19}
    -\bar{\psi}_{rd}^{(1)} (\hat{\q},\hat{\c},\gamma,r_x)   & =   -\frac{1}{2}
\c_2 r_x^4
+\gamma r_x^3
  + \frac{1}{\c_2}\log\lp \mE_{{\mathcal U}_2} e^{  -r_x \c_2 D_1^{(bin)}(c_k(\q))   }\rp
 \nonumber \\
& \rightarrow
    \gamma r_x^3
+   \frac{1}{\c_2}\log\lp 1 -r_x \c_2 \mE_{{\mathcal U}_2}  D_1^{(bin)}(c_k(\q))  \rp
 \nonumber \\
& \rightarrow
    \gamma r_x^3
-   r_x  \mE_{{\mathcal U}_2}  D_1^{(bin)}(c_k(\q))
  \nonumber \\
& \rightarrow
    \gamma r_x^3
+  f_q^{(1)},
  \end{align}
  where
  \begin{eqnarray}\label{eq:prac19a0}
  f_q^{(1)} = -r_x \mE_{{\mathcal U}_2}  D_1^{(bin)}(c_k(\q)).
  \end{eqnarray}
After setting
  \begin{eqnarray}\label{eq:prac19a1}
f_{q,1}^{(1)} & = &  2 \lp - \sqrt{2} / \sqrt{2\pi} e^{-\gamma^2} + \frac{\gamma}{2} \erfc(\gamma) \rp  \nonumber \\
f_{q,2}^{(1)} & = &  2/2/\gamma  (  \sqrt{2}\gamma  e^{-\gamma^2}/\sqrt{2\pi} - (1/2 - 1/2 \erfc(\gamma) )  ),
   \end{eqnarray}
and solving the remaining integrals one finds
  \begin{eqnarray}\label{eq:prac19a2}
f_q^{(1)} = -r_x \lp f_{q,1}^{(1)} + f_{q,2}^{(1)} \rp.
  \end{eqnarray}
One then ealso has
\begin{align}\label{eq:prac20}
  f^{(1)}_{csk}(\infty)=-\frac{1}{\sqrt{2}} \bar{\psi}_{rd}^{(1)}(\hat{\q},\hat{\c},\hat{\gamma},r_x).
  \end{align}
For example, for $r_x=1$ one has $\hat{\gamma}\rightarrow 0$, $f_{q,1}^{(1)}     \rightarrow -\frac{2}{\sqrt{\pi}} $, $f_{q,2}^{(1)}  \rightarrow 0$ and
\begin{align}\label{eq:prac21}
  f^{(1)}_{csk}(\infty) \rightarrow  - \frac{1}{\sqrt{2}} \lp f_{q,1}^{(1)} + f_{q,2}^{(1)} \rp
\rightarrow   \sqrt{\frac{2}{\pi}} \approx  \bl{\mathbf{0.7979}}.
  \end{align}

\underline{2) \textbf{\emph{$r=2$ -- second level of lifting:}}} For  $r=2$, we have $\hat{\q}_1\rightarrow 1$, $\hat{\q}_{r+1}=\hat{\q}_{3}=0$, but in general  $\hat{\c}_{2}\neq 0$ and $\q_2\neq0$. Similarly to what we did above, we now write
\begin{align}\label{eq:prac24}
   - \bar{\psi}_{rd}^{(2)} (\q,\c,\gamma,r_x)   & =  - \frac{1}{2}
(1-\q^2_2)\c_2r_x^4
   + \gamma r_x^3
  + \frac{1}{\c_2}\mE_{{\mathcal U}_3}\log\lp \mE_{{\mathcal U}_2} e^{   -r_x \c_2 D_1^{(bin)}(c_k(\q))    }   \rp \nonumber \\
& =  - \frac{1}{2}
(1-\q^2_2)\c_2r_x^4
   + \gamma r_x^3
  + f_q^{(2)},
   \end{align}
with
\begin{eqnarray} \label{eq:prac24a0}
f_q^{(2)} = \frac{1}{\c_2}\mE_{{\mathcal U}_3}\log\lp \mE_{{\mathcal U}_2} e^{   -r_x \c_2 D_1^{(bin)}(c_k(\q))    }   \rp.
 \end{eqnarray}
After setting
\begin{eqnarray} \label{eq:prac24a1}
 D & = &  -\sqrt{\q_2} \h_i^{(3)} /\sqrt{1-\q_2} + \sqrt{2}\gamma/\sqrt{1-\q_2} \nonumber \\
C & = & -\sqrt{\q_2} \h_i^{(3)} /\sqrt{1-\q_2} - \sqrt{2}\gamma/\sqrt{1-\q_2} \nonumber \\
E & = &  \frac{ r_x\c_2} {2\gamma }  \q_2  \lp \h_i^{(3)}\rp^2  \nonumber \\
A & = & -  \frac{ r_x\c_2} {2\gamma }     2\sqrt{1-\q_2} \sqrt{\q_2}  \h_i^{(3)}   \nonumber \\
B & = & - \frac{ r_x\c_2} {2\gamma }    ( 1-\q_2  )  + 1/2  \nonumber \\
 f_{q,1}^{(2)} & = & \frac{  e^{ E+A^2/(4 B)  }     (\erf((A + 2 B D)/(2 \sqrt{|B| }    ) ) - \erf((A + 2B C)/(2 \sqrt{ |B|  }  )   ))}
 {
 2 \sqrt{2}
 \sqrt{ |B |  }   },
 \end{eqnarray}
and
\begin{eqnarray} \label{eq:prac24a2}
 C_1 & = & -\sqrt{\q_2} \h_i^{(3)} /\sqrt{1-\q_2} + \sqrt{2}\gamma/\sqrt{1-\q_2} \nonumber \\
 A_1 & = &   r_x\c_2     \sqrt{2}\sqrt{1-\q_2} \nonumber \\
B_1 & = &   r_x\c_2    \sqrt{2} \sqrt{\q_2}  \h_i^{(3)}    \nonumber \\
 f_{q,21}^{(2)} & = & \frac{1}{2} e^{ B_1 + A_1^2/2}   (\erf((A_1 - C_1)/\sqrt{2}  ) + 1)  \nonumber \\
  C_2 & = & -\sqrt{\q_2} \h_i^{(3)} /\sqrt{1-\q_2} - \sqrt{2}\gamma/\sqrt{1-\q_2} \nonumber \\
 A_2 & = &  - r_x\c_2     \sqrt{2}\sqrt{1-\q_2} \nonumber \\
B_2 & = &  - r_x\c_2    \sqrt{2} \sqrt{\q_2}  \h_i^{(3)}    \nonumber \\
 f_{q,22}^{(2)} & = & \frac{1}{2} e^{ B_2 + A_2^2/2}   (\erfc((A_2 - C_2)/\sqrt{2}  ) )  \nonumber \\
 f_{q,2}^{(2)} & = &  e^{-\c_2 r_x\gamma}   \lp f_{q,21}^{(2)} + f_{q,22}^{(2)}\rp,
 \end{eqnarray}
solving the remaining integrals gives
\begin{eqnarray} \label{eq:prac24a3}
f_q^{(2)} = \frac{1}{\c_2}\mE_{{\mathcal U}_3}\log\lp    f_{q,1}^{(2)}  +  f_{q,2}^{(2)}    \rp.
 \end{eqnarray}
For a concrete example  $r_x=1$  considered on the first level, one finds after computing all the derivatives $\hat{\gamma}\rightarrow 0$, $\q_2=0.4768$, $\hat{\c}_2=0.9623$, and
\begin{align}\label{eq:prac25}
\hspace{-2in}(\mbox{\emph{full} second level:}) \qquad \qquad  f^{(2)}_{csk}(\infty) \rightarrow \bl{\mathbf{0.7653}}.
  \end{align}

\underline{3) \textbf{\emph{$r=3$ -- third level of lifting:}}} To make numerical evaluations less cumbersome we will mostly focus on the so-called third \emph{partial} level of lifting (as we will soon see these results will be almost indistinguishable from the second full level). We have $r=3$, $\hat{\q}_1\rightarrow 1$, and  $\hat{\q}_{r+1}=\hat{\q}_{4}=0$. In addition to having  $\hat{\c}_{2}\neq 0$ and $\q_2\neq0$, we now also have $\hat{\c}_{3}\neq 0$ and $\q_3=0$. Analogously to (\ref{eq:prac24}) and (\ref{eq:prac24a0}), we write
\begin{align}\label{eq:prac26}
   - \bar{\psi}_{rd}^{(3,p)} (\q,\c,\gamma,r_x)   & =  - \frac{1}{2}
(1-\q^2_2)\c_2r_x^4 - \frac{1}{2}
\q^2_2\c_3r_x^4
   + \gamma r_x^3
  + \frac{1}{\c_3} \log\lp   \mE_{{\mathcal U}_3}   \lp   \mE_{{\mathcal U}_2} e^{   -r_x \c_2 D_1^{(bin)}(c_k(\q))    }  \rp^{\frac{\c_3}{\c_2}}   \rp \nonumber \\
& =  - \frac{1}{2}
(1-\q^2_2)\c_2r_x^4
 - \frac{1}{2}
\q^2_2\c_3r_x^4
   + \gamma r_x^3
  + f_q^{(3)},
   \end{align}
where
\begin{eqnarray} \label{eq:prac26a0}
f_q^{(3)} = \frac{1}{\c_3} \log\lp   \mE_{{\mathcal U}_3}   \lp   \mE_{{\mathcal U}_2} e^{   -r_x \c_2 D_1^{(bin)}(c_k(\q))    }  \rp^{\frac{\c_3}{\c_2}}   \rp.
 \end{eqnarray}
  Taking again $r_x=1$ as a concrete example and computing all the derivatives one finds $\hat{\gamma}\rightarrow 0$, $\hat{\q}_2= 0.7434$, $\hat{\c}_2= 1.4586$,  $\hat{\c}_3 =  0.3569$, and
\begin{align}\label{eq:prac27}
\hspace{-2in}(\mbox{\emph{partial} third level:}) \qquad \qquad f^{(3,p)}_{csk}(\infty) \rightarrow \bl{\mathbf{0.7640}}.
  \end{align}
Table \ref{tab:2rsbunifiedsqrtpos} summarizes the above results in a systematic way. It shows how all the relevant quantities change as the lifting mechanism progresses through the first three levels.

\begin{table}[h]
\caption{$r$-sfl RDT parameters; CLUP SK model; $r_x=1$;  $\hat{\c}_1\rightarrow 1$; $n,\beta\rightarrow\infty$}\vspace{.1in}
\centering
\def\arraystretch{1.2}
\begin{tabular}{||l||c||c|c|c||c|c||c||}\hline\hline
 \hspace{-0in}$r$-sfl RDT                                             & $\hat{\gamma}$      & $\hat{\q}_3$ & $\hat{\q}_2$  & $\hat{\q}_1$ &  $\hat{\c}_3$   & $\hat{\c}_2$    & $f_{csk}^{(r)} (\infty) $  \\ \hline\hline
$\mathbf{1}$ (full)                                      & $0$ & $0$ & $0$ & $\rightarrow 1$ &  $\rightarrow 0$
 &  $\rightarrow 0$  & \bl{$\mathbf{0.7979}$} \\ \hline\hline
   $\mathbf{2}$ (full)                                      & $0$ & $0$ & $0.4768$ & $\rightarrow 1$ &  $\rightarrow 0$
 &  $0.9623$   & \bl{$\mathbf{0.7653}$}  \\ \hline\hline
 $\mathbf{3}$ (partial)                                      & $0$ & $0$ & $0.7434$ & $\rightarrow 1$ &  $0.3569$
 &  $1.4586$   & \bl{$\mathbf{0.7640}$}   \\ \hline\hline
\end{tabular}
\label{tab:2rsbunifiedsqrtpos}
\end{table}

The above values are given for the concreteness  and are obtained for a particular value $r_x=1$ (as such they also correspond to the plain SK model).  In Figure \ref{fig:fig1} these results are complemented with the results  for all $r_x\in (0,1]$ which correspond to the CLuP-SK model. Two things should be noted: \textbf{\emph{(i)}} The effect of lifting becomes more visible as $r_x$ increases and is maximal for $r_x\rightarrow 1$; and \textbf{\emph{(ii)}} Already on the first level the curve is monotonically increasing and such phenomenology remains intact as one moves to higher levels of lifting (on higher levels the effect becomes less pronounced and the curves are visually almost indistinguishable from the one on the second level  rendering plotting them practically pointless).  As we will see below, monotonicity and absence of local optima play a key role in understanding underlying SK algorithms.
\begin{figure}[h]
\centering
\centerline{\includegraphics[width=1.00\linewidth]{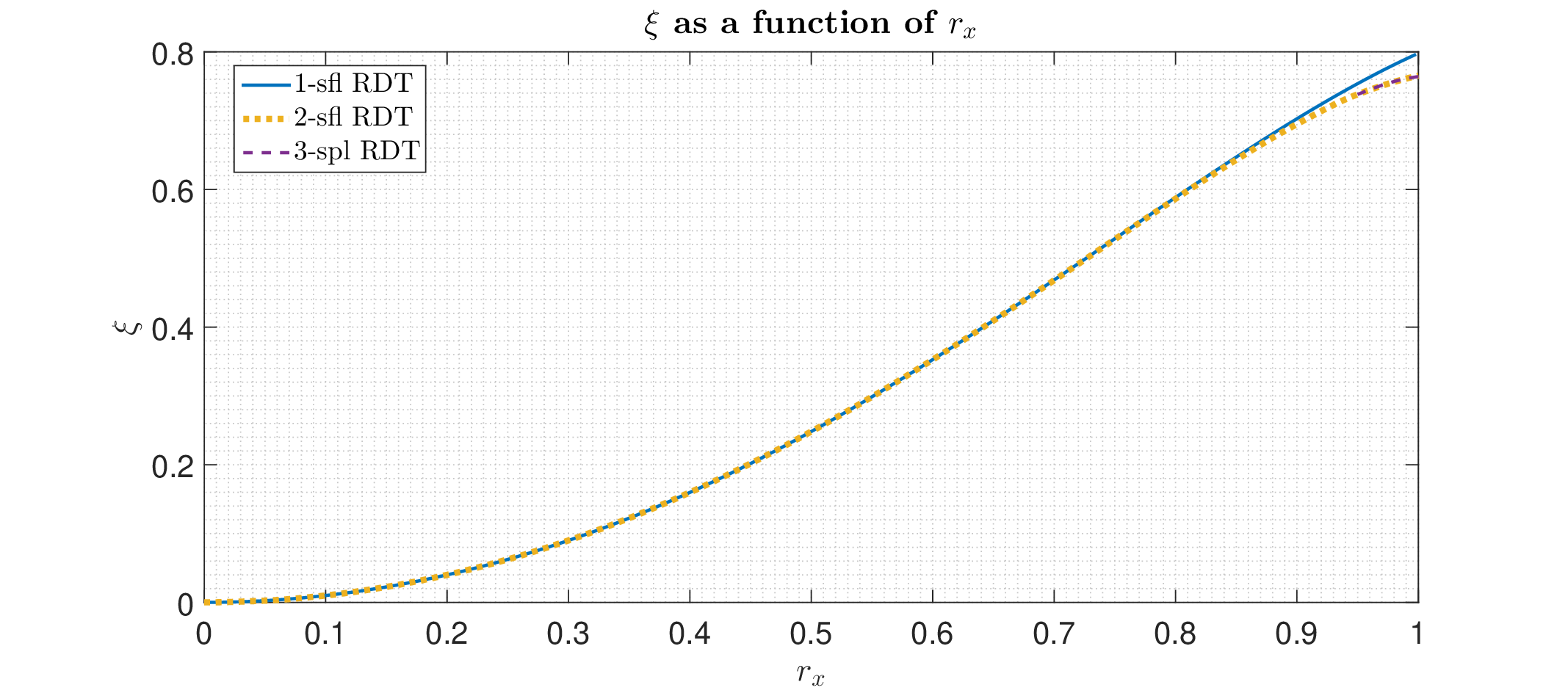}}
\caption{$\xi$ as a function of $r_x$}
\label{fig:fig1}
\end{figure}

\underline{4) \textbf{\emph{$r$-th level of lifting:}}}  For the completeness, we also give the explicit form of $\bar{\psi}_{rd}^{(r)}(\q,\c,\gamma,r_x) $ for general $r$. Mimicking the above procedures, one has
 \begin{align}\label{eq:prac29}
   -  \bar{\psi}_{rd}^{(r)}(\q,\c,\gamma,r_x)   & =  - \frac{1}{2}
\sum_{k=2}^{r+1}(\q^2_{k-1}-\q^2_k)\c_k  +\gamma r_x^3   + f_q^{(r)},    \end{align}
where
\begin{eqnarray}\label{eq:prac30}
f_q^{(r)} = \frac{1}{\c_r}\mE_{{\mathcal U}_{r+1}}\log\lp \mE_{{\mathcal U}_r} \lp \dots \lp \mE_{{\mathcal U}_3} \lp  \mE_{{\mathcal U}_2}
e^{   -r_x \c_2 D_1^{(bin)}(c_k(\q))    }
 \rp^{\frac{\c_3}{\c_2}} \rp^{\frac{\c_4}{\c_{3}}} \dots \rp^{\frac{\c_r}{\c_{r-1}}} \rp.
 \end{eqnarray}
In addition to making the role of all key parameters $\q$, $\c$, and $\gamma$ clearly visible, the above expression also enables evaluation of all derivatives needed at any level of lifting $r\in\mN$.  Moreover, we conducted all of the above numerical evaluations relying on modulo-$\m$ sfl results from \cite{Stojnicsflgscompyx23,Stojnicnflgscompyx23,Stojnicflrdt23} and obtained exactly the same results as in Table \ref{tab:2rsbunifiedsqrtpos} and Figure \ref{fig:fig1}. This basically indicates the  \emph{minimization} type of $\c$ \emph{stationarity} and is in a remarkable agreement with the corresponding discoveries from \cite{Stojnichopflrdt23,Stojnicbinperflrdt23,Stojnicnegsphflrdt23,Stojnicabple25}.

\section{Algorithmic implications}
\label{sec:algimp}

Over the last several decades \emph{message-passing} (MP) algorithms have been among the most successful algorithmic tools for handling hard optimization problems. A specific form called \emph{Approximate message passing} (AMP)
\cite{DonohoMM11,DonMalMon09} has been particularly successful in handling so-called \emph{planted} (or \emph{student-teacher} (ST)) models. Being both sufficiently complex to closely match excellent MP performance and sufficiently simple to allow for strong theoretical support \cite{Bolt14,BayMon10lasso,BayMon10,BayMonconf10},  AMPs  quickly became an irreplaceable algorithmic tool in a host of different scientific fields -- far exceeding their original introduction within compressed sensing context. While AMPs are generically excellent algorithms, some of their features including statistical sensitivity, frequent reliance on planted signal's a priori available knowledge and excessively large underlying dimensions in certain applications may be perceived as obstacles towards universal practicality.

\cite{Stojnicclupint19,Stojnicclupspreg20}   introduced \emph{Controlled Loosening-up} (CLuP) algorithmic mechanism  as a powerful alternative to AMP and other existing state of the art methods for solving planted models (CLuP's excellent performance was showcased via two concrete classical applications of binary \cite{Stojnicclupint19} and  sparse  \cite{Stojnicclupspreg20} regression in  MIMO ML detection and compressed sensing).  As CLuP managed to circumvent some of the key AMP's features while retaining its accuracy it positioned itself as a viable choice for many well known planted models. We here show that a CLuP like implementation is capable of producing excellent results in non-planted scenarios as well. As discussed earlier, we consider the so-called ground state energy of the celebrated Sherrington-Kirkpatrick (SK) model \cite{SheKir72}. In classical optimization theory the problem corresponds to famous indefinite quadratic form maximization over the vertices of the binary cube and is among foundational problems of the NP theory.

\subsection{CLuP-SK algorithmic implementation}
\label{sec:clupskalg}

For an indefinite matrix $G\in\mR^{n\times n}$ we are interested in algorithmic solving of
\begin{eqnarray}\label{eq:algimpeq1}
 \max_{\x} & & \x^TG\x \nonumber\\
 \mbox{subject to} & &  \x_i^2=\frac{1}{n}, 1\leq i\leq n.
 \end{eqnarray}
 We propose the following (iterative)  CLuP-like procedure that we call
\begin{eqnarray}\label{eq:algimpeq2}
 \hspace{-.55in} \mbox{\bl{\textbf{\emph{CLuP-SK algorithm:}}}}  \hspace{.65in} \x^{(t+1)} & \rightarrow  &
\mbox{\textbf{gradbar}}\lp\bar{f}_{b,x} \lp \x;\bar{t}_{0x}^{(t)} \rp ;\x^{(t)},\bar{t}_{0x}^{(t)} \rp
 \nonumber \\
\bar{t}_{0x}^{(t+1)}  &  \rightarrow  & \bar{c}^{(t)}\bar{t}_{0x}^{(t)}.
\end{eqnarray}
Procedure $\mbox{\textbf{gradbar}}\lp\bar{f}_{b,x} \lp \x;\bar{t}_{0x}^{(t)} \rp ;\x^{(t)},\bar{t}_{0x}^{(t)} \rp$ applies standard gradient descent starting from $\x^{(t)}$ to  function $\bar{f}_{b,x} \lp \x;\bar{t}_{0x}^{(t)}  \rp $ specified by an argument $\bar{t}_{0x}^{(t)}$
\begin{eqnarray}\label{eq:algimpeq3}
\bar{f}_{b,x} \lp\x;\bar{t}_{0x}^{(t)}\rp = - \bar{t}_{0x}^{(t)} \|\x\|_2 - \log\lp - \lp \x^T  \lp 0.9 I  -  \frac{1}{2\sqrt{2n}} \lp G^T+G\rp   \rp \x - \kappa \rp \rp
-\frac{1}{n}\sum_{i=1}^{n}  \log(1-n\x_i^2).
\end{eqnarray}
While $\kappa$ is a free parameter, we found the above procedure as not overly sensitive with respect to $\kappa$. Choosing $\kappa=0.155$ in all our numerical experiments sufficed. Other parameters, starting $t_{0x}^{(0)}$ and incremental $c^{(t)}$ are also fairly flexible. For example, $t_{0x}^{(0)}=0.0005$ and  $c^{(t)}=1.1$ are a solid starting choice that can be changed as $n$ varies. $\x^{(0)}$ is basically any $\x$ that fits under $\log$s. One can start with a random choice of $\pm\frac{1}{\sqrt{n}}$ and then scale down by two until a feasible option is reached.

In Table \ref{tab:tab2}  we include the obtained results for $\xi(1)$ (the thermodynamic limit of the ground state energy of the SK model)
\begin{eqnarray}\label{eq:algimpeq4}
\xi(1) = \lim_{n\rightarrow\infty} \frac{\mE_G \max_{\x\in\mB^n}\x^TG\x}{\sqrt{2n}}.
\end{eqnarray}
As the results from the table show one approaches  $\sim 0.76$ limit even for fairly small $n$ on the order of a few thousands. We should also add that the results in the table are obtained without restarting or any other advanced modifications (like say, stochastic descent). In other words, they are obtained for one choice of $\x^{(0)}$ and with plain gradient descent. While the effect of restarts is likely to fade away as  $n\rightarrow \infty$, for finite $n$ (those that can be simulated) adding restarts in general can be  beneficial. As the obtained results are already beyond the best of the expectations, we skipped adding further modifications.

\begin{table}[h]
\caption{Performance of CLuP-SK algorithm; \textbf{\bl{simulated}/theory} }\vspace{.1in}
\centering
\def\arraystretch{1.2}
\begin{tabular}{||l||c||c||c||c|| }\hline\hline
 \hspace{-0in}$n$                                             & $2000$    & $4000$ & $8000$ &  $\infty$ (\textbf{theory}) \\ \hline\hline
$\xi(1)$                                         & \bl{$\mathbf{0.755}$}  & \bl{$\mathbf{0.757}$}  & \bl{$\mathbf{0.758}$} & $\mathbf{0.763}$  \\ \hline\hline
\end{tabular}
\label{tab:tab2}
\end{table}

\subsection{Loss landscape}
\label{sec:algimplossland}

Ideally, one would like that the output of $\mbox{\textbf{gradbar}}\lp\bar{f}_{b,x}\lp\x;\bar{t}_{0x}^{(t)}\rp ;\x^{(t)},\bar{t}_{0x}^{(t)} \rp$ is $ \mbox{argmin}_{\x}  \bar{f}_{b,x} \lp\x;\bar{t}_{0x}^{(t)} \rp $, i.e., one would like
\begin{eqnarray}\label{eq:algimpeq5}
 \x^{(t+1)}\rightarrow  \mbox{argmin}_{\x}  \bar{f}_{b,x} \lp \x;\bar{t}_{0x}^{(t)} \rp.
\end{eqnarray}
Since we are running a descent type of algorithm, the shape (landscape) of the objective (loss) is critically important in achieving global optima. In case of CLuP-SK procedure, the shape of $\bar{f}_{b,x} \lp\x;\bar{t}_{0x}^{(t)}\rp$ is of interest.

\subsubsection{Closed form based \emph{approximate} landscape characterization}
\label{sec:algimplosslandappox}

While simple modifications of the machinery from previous sections is in principle sufficient to analyze $\bar{f}_{b,x} \lp\x;\bar{t}_{0x}^{(t)}\rp$, ensuing numerical evaluations are a bit more involved and possibly more prone to residual numerical instabilities. To circumvent that we analyze closely related function
\begin{eqnarray}\label{eq:algimpeq6}
f_{b,x} \lp\x; t_{0x} \rp = - t_{0x} \|\x\|_2 - \log\lp - \lp \x^T  \lp 0.9 I  -  \frac{1}{2\sqrt{2n}} \lp G^T+G\rp   \rp \x - \kappa \rp \rp.
\end{eqnarray}
In particular, for a fixed $t_{0x}$ we are interested in behavior of
\begin{eqnarray}\label{eq:algimpeq7}
f_{b} \lp r_x \rp = \min_{\x\in\cX(r_x)} - t_{0x} \|\x\|_2 - \log\lp - \lp \x^T  \lp 0.9 I  -  \frac{1}{2\sqrt{2n}} \lp G^T+G\rp   \rp \x - \kappa \rp \rp.
\end{eqnarray}
Recalling on (\ref{eq:prac0a0})
\begin{eqnarray}\label{eq:algimpeq8}
f_{b} \lp r_x \rp
& = & \min_{\x\in\cX(r_x)} - t_{0x}r_x - \log \lp  -0.9r_x^2 + \xi(r_x) + \kappa \rp
=
 - t_{0x}r_x - \log \lp  -0.9r_x^2 + f_{csk}(\infty)  + \kappa \rp .
\end{eqnarray}
In Figures \ref{fig:fig2}-\ref{fig:fig4}, we show $\frac{f_{b} \lp r_x \rp }{t_{0x}}$ for three different value of $t_{0x}$. Results for the first level of full lifting are  obtained  for $\xi(r_x)=f^{(1)}_{csk} (\infty) $; corresponding second level results are  obtained  for $\xi(r_x)=f^{(2)}_{csk} (\infty) $. Additionally, let the optimal $r_x$ be
\begin{eqnarray}\label{eq:algimpeq9}
\hat{r}_x =  \mbox{argmin}_{r_x\in(0,1]} f_{b} \lp r_x \rp.
\end{eqnarray}
We observe the same phenomenological behavior for all three $t_{0x}$ choices. For both first and second lifting level, $\frac{f_{b} \lp r_x \rp }{t_{0x}}$ (and therefore $f_{b} \lp r_x \rp$  itself as well) have no local optima that at the same time are not global. The same trend continues for any $t_{0x}$ that we tested. This basically indicates continuous (over $t_{0x})$ presence of well-shaped loss landscape amenable to the use of descending algorithms. Whether or not other intrinsic features beyond the loss landscape play much of an additional role remains to be seen. Studies regarding organizational structuring of  ``near solutions'' might be of interest to pursue as next steps in these directions. A couple of very popular options include the overlap gap properties (OGP) \cite{Gamar21,GamarSud14,GamarSud17,GamarSud17a,AchlioptasCR11,HMMZ08,MMZ05} or local entropies (LE) \cite{Bald15,Bald16,Bald20}. Nonetheless, even if additional intrinsic properties do impact algorithmic performance,  the favorable -- unwanted local optima free -- loss landscapes that we uncover here are generically a \emph{necessary} condition for the success of descending algorithms. As such their absence must be ruled out to ensure generic solvability. Figure \ref{fig:fig2}-\ref{fig:fig4} demonstrate that this is the case for the first and second level of lifting. While tiny corrections on higher levels are present they are visually almost undetectable and make no significant impact on overall landscape smoothness phenomenology.
\begin{figure}[h]
\centering
\centerline{\includegraphics[width=1.00\linewidth]{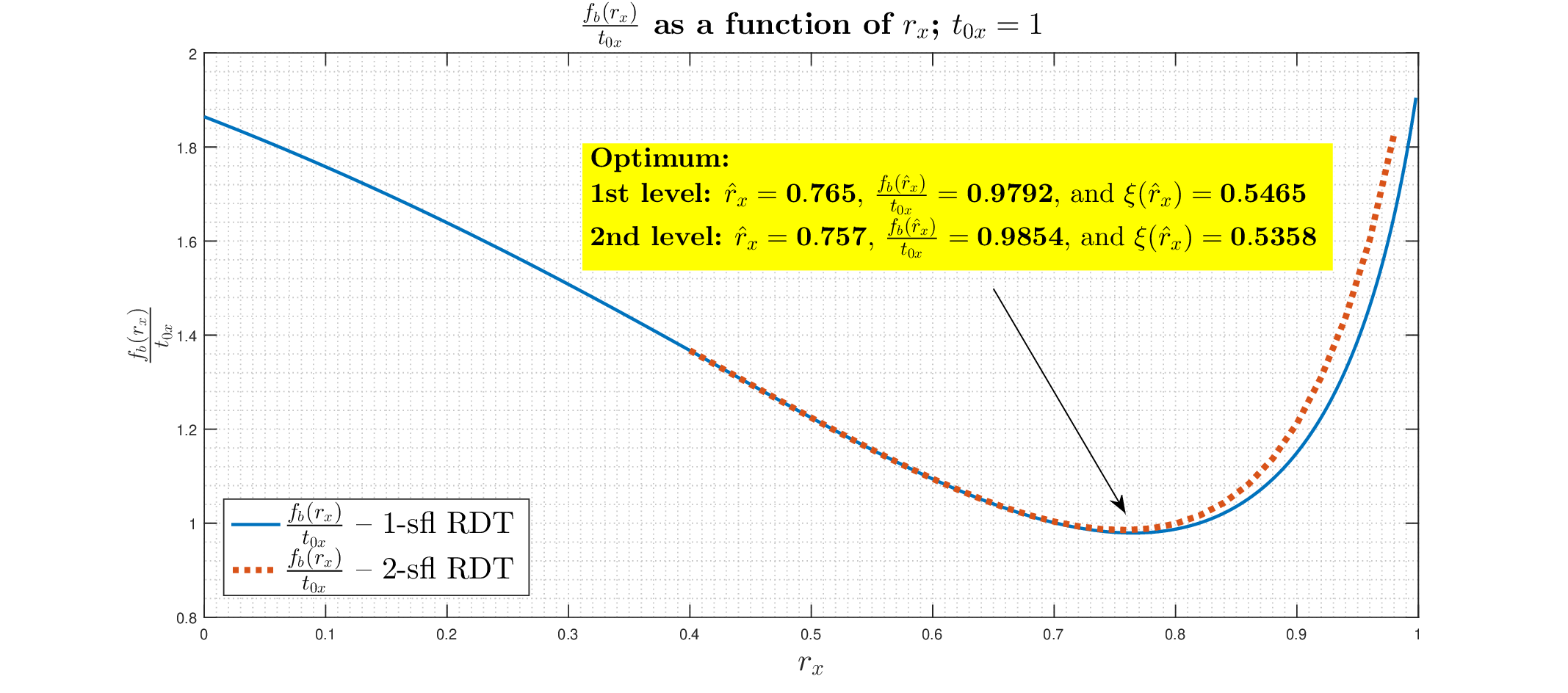}}
\caption{$\frac{f_b(r_x)}{t_{0x}}$ as a function of $r_x$; $t_{0x}=1$}
\label{fig:fig2}
\end{figure}
\begin{figure}[h]
\centering
\centerline{\includegraphics[width=1.00\linewidth]{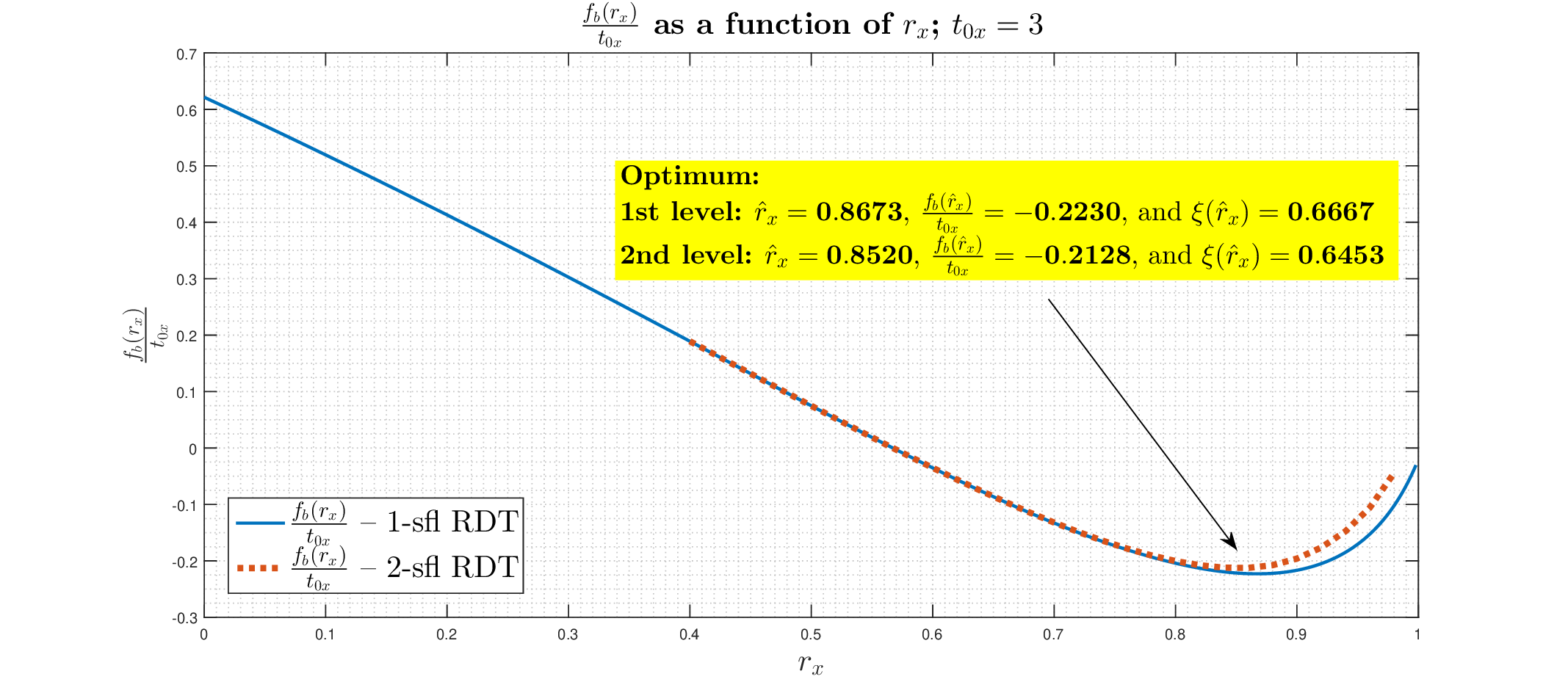}}
\caption{$\frac{f_b(r_x)}{t_{0x}}$ as a function of $r_x$; $t_{0x}=3$}
\label{fig:fig3}
\end{figure}
\begin{figure}[h]
\centering
\centerline{\includegraphics[width=1.00\linewidth]{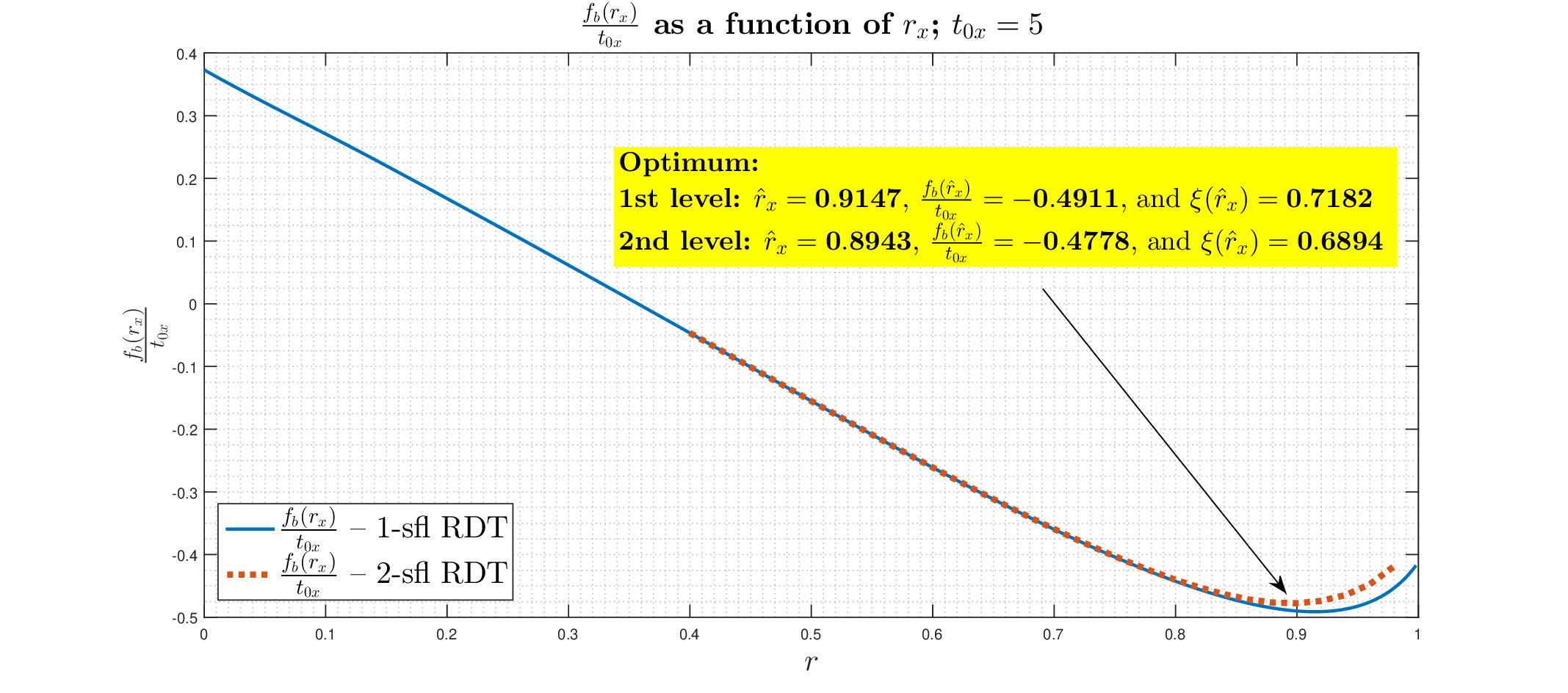}}
\caption{$\frac{f_b(r_x)}{t_{0x}}$ as a function of $r_x$; $t_{0x}=5$}
\label{fig:fig4}
\end{figure}

We also ran the following (barrier descending) procedure to simulate $f_{b} \lp \hat{r}_x \rp$
\begin{eqnarray}\label{eq:algimpeq10}
\x^{(t+1)} & \rightarrow  & \mbox{\textbf{gradbar}}\lp f_{b,1} \lp \x;\bar{t}_{0x}^{(t)}  \rp ;\x^{(t)},\bar{t}_{0x}^{(t)} \rp \nonumber \\
\bar{t}_{0x}^{(t+1)} & \rightarrow & \bar{c}^{(t)}\bar{t}_{0x}^{(t)},
\end{eqnarray}
with  function $f_{b,1} \lp \x;\bar{t}_{0x}^{(t)}  \rp $ specified by an argument $\bar{t}_{0x}^{(t)}$
\begin{eqnarray}\label{eq:algimpeq11}
f_{b,1} \lp\x;\bar{t}_{0x}^{(t)}\rp
 & =  &
  \bar{t}_{0x}^{(t)} \lp -t_{0x}  \|\x\|_2 - \log\lp - \lp \x^T  \lp 0.9 I  -  \frac{1}{2\sqrt{2n}} \lp G^T+G\rp   \rp \x - \kappa \rp \rp \rp
\nonumber \\
& & -\frac{1}{n}\sum_{i=1}^{n}  \log(1-n\x_i^2).
\end{eqnarray}

The results are shown in Figures \ref{fig:fig5}-\ref{fig:fig7}. The theoretical predictions are  given for the third partial level of lifting so that we do not expect basically any visible improvement on higher levels. Even though simulations are done with relatively small dimensions ($n=2000$), they are in an excellent agreement with the theoretical predictions for all three critical quantities, $\frac{f_b(\hat{r}_x)}{t_{0x}}$, $\xi(\hat{r}_x)$, and $\hat{r}_x$. In parallel we in Figures \ref{fig:fig5}-\ref{fig:fig7} show how these very same quantities behave when the original CLuP-SK dynamics is employed (to showcase that CLuP-SK dynamics excellent behavior starts for very small dimensions we chose three different values $n=200$, $n=1000$, and $n=2000$). Since this dynamics has a slightly different objective the curves are also different from the ones obtained for $\hat{r}_x$. However, as we hinted earlier, the phenomenology of all curves is completely preserved even though we used closely related objective for theoretical evaluations. Needless to say, for $t_{0x}\rightarrow \infty$ both dynamics converge to the same (optimal) values.

\begin{figure}[h]
\centering
\centerline{\includegraphics[width=1.00\linewidth]{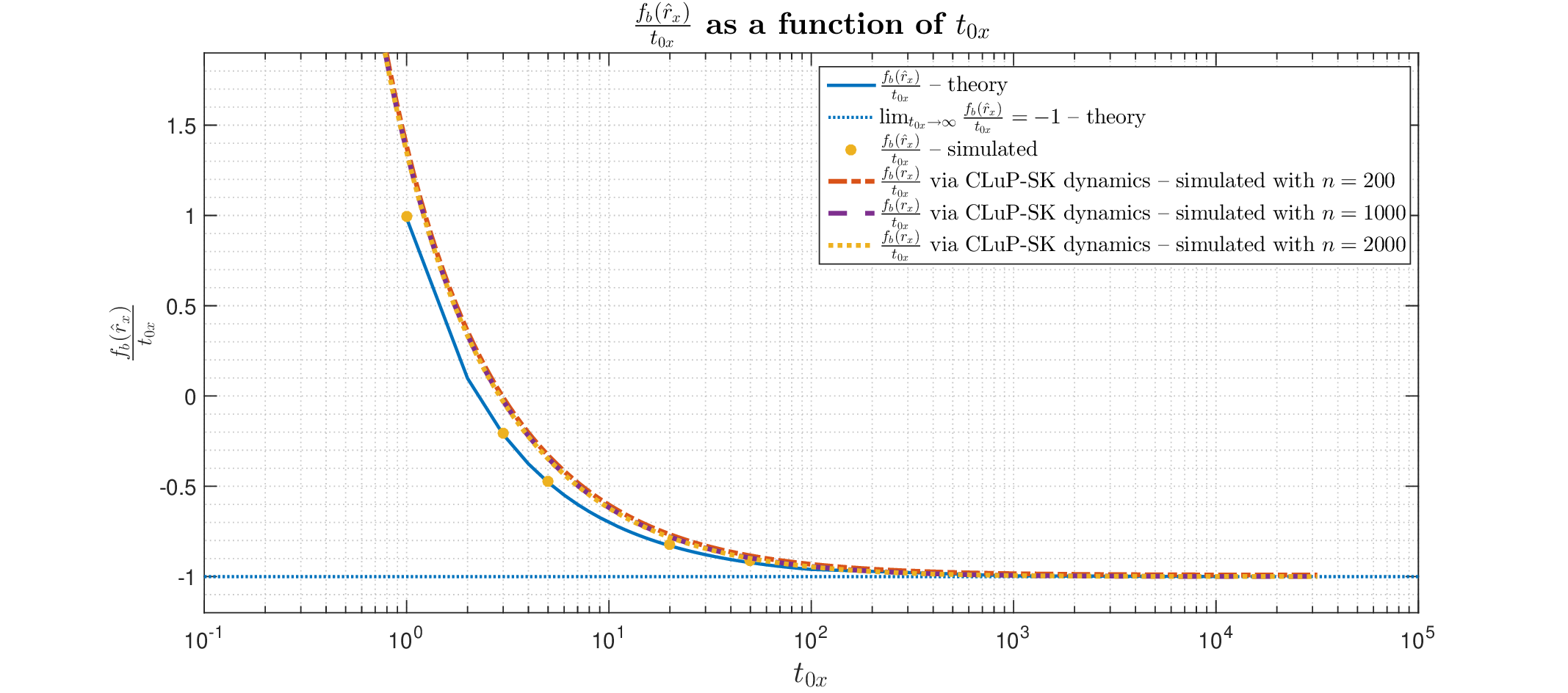}}
\caption{$\frac{f_b(\hat{r}_x)}{t_{0x}}$ as a function of $t_{0x}$}
\label{fig:fig5}
\end{figure}

\begin{figure}[h]
\centering
\centerline{\includegraphics[width=1.00\linewidth]{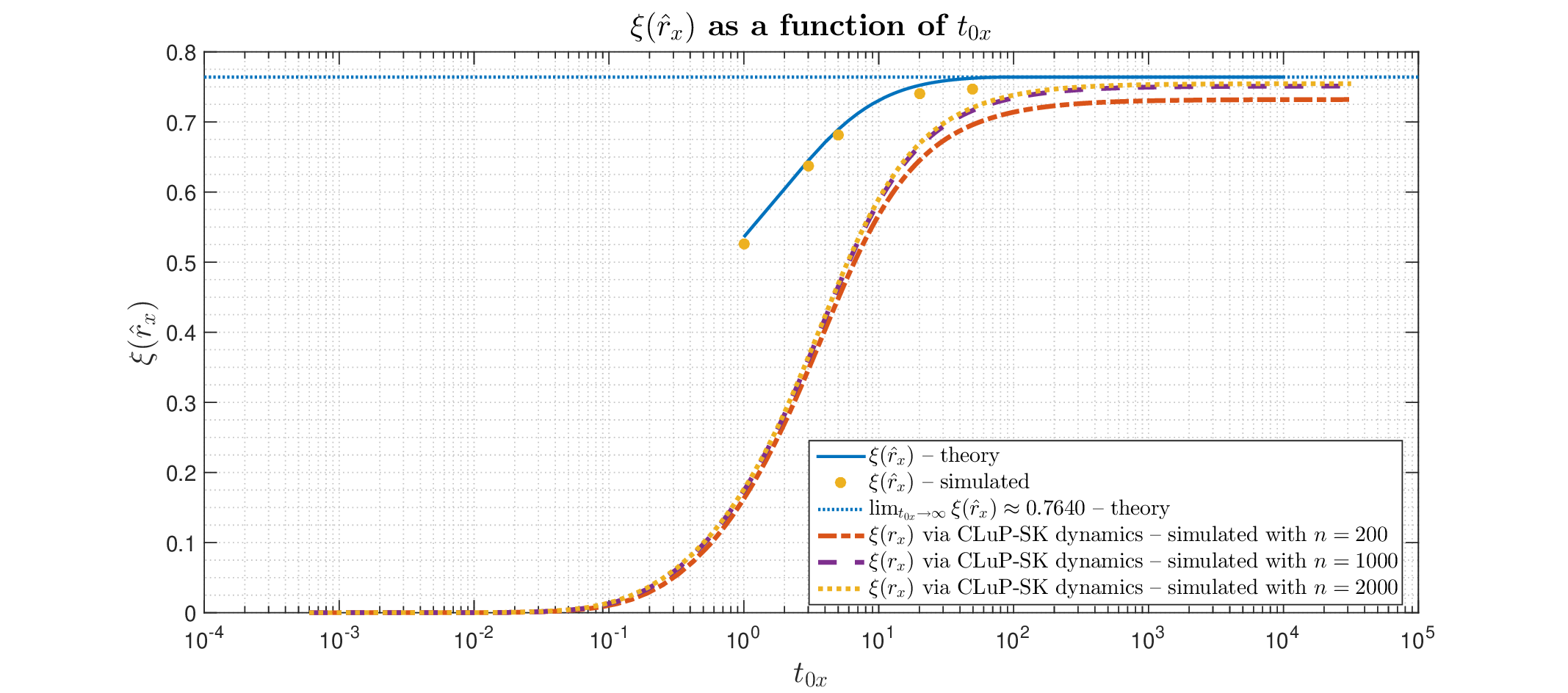}}
\caption{$\xi(\hat{r}_x)$ as a function of $t_{0x}$}
\label{fig:fig6}
\end{figure}

\begin{figure}[h]
\centering
\centerline{\includegraphics[width=1.00\linewidth]{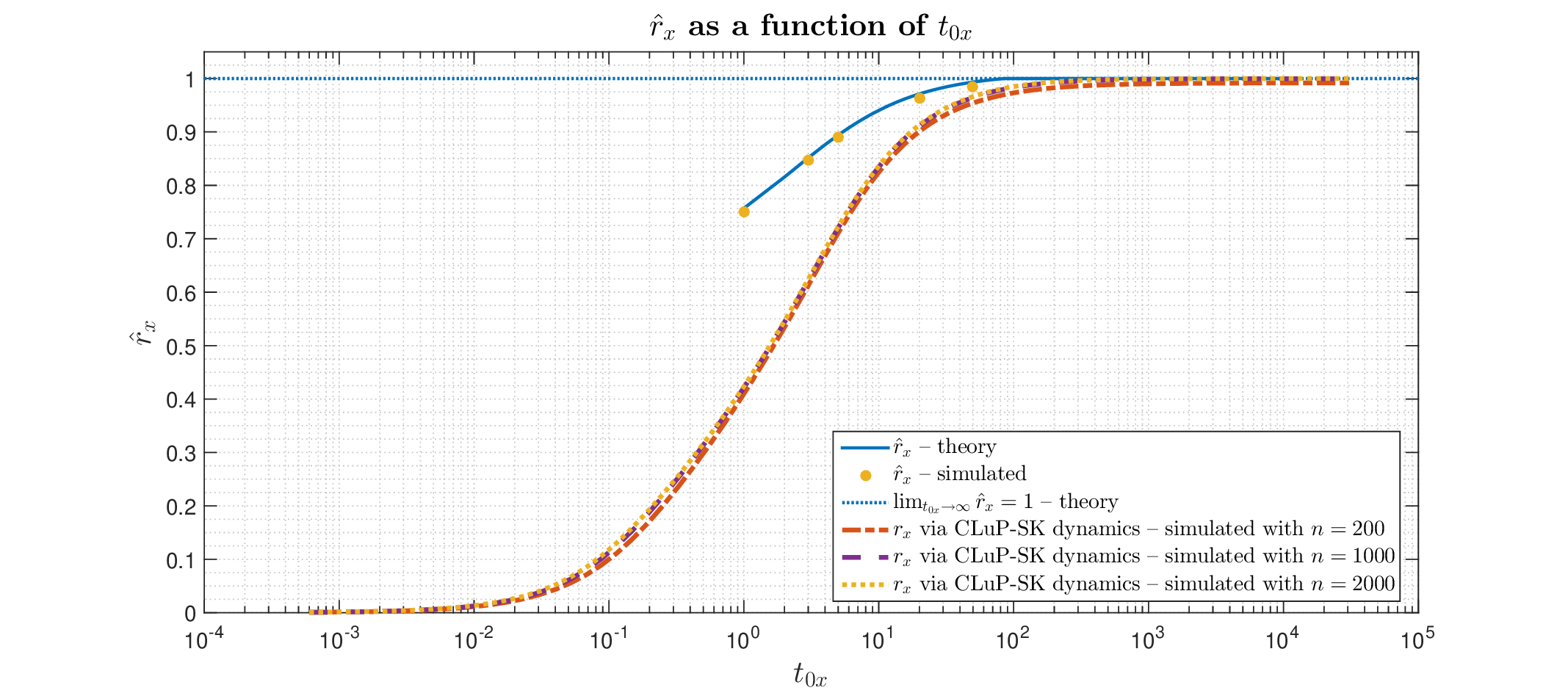}}
\caption{$\hat{r}_x$ as a function of $t_{0x}$}
\label{fig:fig7}
\end{figure}

\subsubsection{Numerically based \emph{exact} landscape characterization}
\label{sec:algimplosslandexact}

We above showcased a methodology that can be used to \emph{approximatively} characterize CLuP-SK loss landscape while avoiding potential numerical problems. Instead of focusing on true CLuP-SK objective, $\bar{f}_b \lp\x;\bar{t}_{0x}^{(t)}\rp$,, one focuses on its a slightly simpler ``trimmed'' version $\bar{f}_{b,1} \lp\x;\bar{t}_{0x}^{(t)}\rp$. Consequently, the resulting characterization is approximative but associated numerical evaluations are much simpler and more accurate.

As mentioned above, an alternative to the above approximative approach is to analyze $\bar{f}_b \lp\x;\bar{t}_{0x}^{(t)}\rp$ itself. Since the close-form helpful relations are not present in such an approach one has to completely rely on corresponding numerical evaluations. We show next what kind of results are obtained if such a route is pursued (needless to say, due to their numerical origins, these results need to be taken with a bit of additional caution). To that end, for  fixed $r_x$ ($0<r_x\leq 1$) and $\bar{r}_x$  we introduce the following
\begin{eqnarray}\label{eq:excclupskeq1}
\hspace{-1.5in} \bl{\overline{\mbox{\textbf{\emph{CLuP-SK}}}}} \mbox{\bl{\textbf{\emph{ model:}}}}  \hspace{.5in}  \max_{\x\in \bar{\cX}(r_x,\bar{r}_x)} \x^TG\x,
 \end{eqnarray}
with
\begin{eqnarray}\label{eq:excclupskeq1}
 \bar{\cX}(r_x,\bar{r}_x) \triangleq \left \{ \x | \x\in\mR^n,\|\x\|_2=r_x, \frac{1}{n}\sum_{i=1}^{n}\log\lp 1-n\x_i^2\rp =\bar{r}_x \right \}.
\end{eqnarray}
Analogously to (\ref{eq:cska0ham1})-(\ref{eq:cska0partfun}) we  define corresponding Hamiltonian and partition function
\begin{equation}
\bar{\cH}_{csk}(G)= \sum_{\x\in  \bar{\cX}(r_x,\bar{r}_x)   }\x^TG\x,\label{eq:exccska0ham1}
\end{equation}
and
\begin{equation}
\bar{Z}_{csk}(\beta,G)=\sum_{\x\in  \bar{\cX}(r_x,\bar{r}_x)   }e^{\beta \bar{\cH}_{csk}(G)}.\label{eq:exccska0partfun}
\end{equation}
Following closely  (\ref{eq:cska0logpartfunsqrt})-(\ref{eq:cska0limlogpartfunsqrt}) we then have for the thermodynamic limit average free energy
\begin{equation}
\bar{f}_{csk}(\beta)=\lim_{n\rightarrow\infty}\frac{\mE_G\log{(\bar{Z}_{csk}(\beta,G)})}{\beta \sqrt{2n}}
=\lim_{n\rightarrow\infty} \frac{\mE_G\log{(\sum_{\x\in \bar{\cX}(r_x,\bar{r}_x)} e^{\beta\bar{\cH}_{csk}(G)})}}{\beta \sqrt{2n}},\label{eq:exccska0logpartfunsqrt}
\end{equation}
and its ground state
\begin{eqnarray}
\bar{\xi}(r_x,\bar{r}_x) \triangleq \bar{f}_{csk}(\infty)  & \triangleq  & \lim_{\beta\rightarrow\infty} \bar{f}_{csk}(\beta)  =
\lim_{\beta,n\rightarrow\infty}\frac{\mE_G\log{(\bar{Z}_{csk}(\beta,G)})}{\beta \sqrt{2n}}
\nonumber \\
& = &
 \lim_{n\rightarrow\infty}\frac{\mE_G \max_{\x\in \bar{\cX} (r_x,\bar{r}_x)} \bar{\cH}_{csk}(G)}{\sqrt{2n}}  = \lim_{n\rightarrow\infty}\frac{\mE_G \max_{\x\in \bar{\cX} (r_x,\bar{r}_x)}\x^TG\x}{\sqrt{2n}}.
  \label{eq:exccska0limlogpartfunsqrt}
\end{eqnarray}
Utilizing $\bar{\cX}(r_x,\bar{r}_x)$ instead of $\cX(r_x)$ and mimicking step-by-step the procedure presented in Section \ref{sec:randlincons} we now have analogously to (\ref{eq:excprac11})
 \begin{eqnarray}
\bar{\xi}(r_x,\bar{r}_x) = \bar{f}_{csk}(\infty)
& = &  \lim_{n\rightarrow\infty}\frac{\mE_G \max_{\x\in \bar{\cX}(r_x\bar{r}_x)} \x^TG\x  }{\sqrt{2n}}
    =
 -\frac{1}{\sqrt{2}}\lim_{n\rightarrow\infty} \psi_{rd,1}(\hat{\q},\hat{\c},r_x,\bar{r}_x),
  \label{eq:excprac11}
\end{eqnarray}
where following  (\ref{eq:prac1}) and (\ref{eq:prac3})
\begin{align}\label{eq:excprac1}
    \psi_{rd,1}(\q,\c,r_x,\bar{r}_x)
  & =   \frac{1}{2}    \sum_{k=2}^{r+1}\Bigg(\Bigg.
   \q^2_{k-1}
   -\q^2_{k}
  \Bigg.\Bigg)
\c_k
  - \frac{1}{n}\varphi(\bar{D}^{(bin)}(r_x,\bar{r}_x)),
  \end{align}
non-fixed parts of $\hat{\q}$, and  $\hat{\c}$  are the solutions of
\begin{eqnarray}\label{eq:excprac2}
    \frac{d \psi_{rd,1}(\q,\c,r_x,\bar{r}_x)}{d\q} & = & 0 \nonumber \\
   \frac{d \psi_{rd,1}(\q,\c,r_x,\bar{r}_x)}{d\c} & = & 0,
   \end{eqnarray}
 and
\begin{eqnarray}\label{eq:excprac3}
\bar{D}^{(bin)}(r_x,\bar{r}_x) & = & \max_{\x\in\bar{\cX}(r_x.\bar{r}_x)} \lp  r_x \sqrt{2n}      \lp\sum_{k=2}^{r+1}c_k\h^{(k)}\rp^T\x  \rp.
 \end{eqnarray}
We then find
\begin{eqnarray}\label{eq:excprac4}
\bar{D}^{(bin)}(r_x,\bar{r}_x)  =  \max_{\x\in\bar{\cX}(r_x,\bar{r}_x)} \lp r_x  \sqrt{2n}      \lp\sum_{k=2}^{r+1}c_k\h^{(k)}\rp^T\x  \rp
 = \min_{\gamma,\nu} \lp - r_x \sum_{i=1}^n \bar{D}^{(bin)}_i(c_k) +\gamma r_x^3 n +\nu r_x \bar{r}_x n \rp,
 \end{eqnarray}
where we have adopted scaling $\gamma\sim \gamma\sqrt{n}$ and $\nu\sim \nu\sqrt{n}$ and
\begin{eqnarray}\label{eq:excprac5}
\bar{D}^{(bin)}_i(c_k)=    - \sqrt{2}      \lp\sum_{k=2}^{r+1}c_k\h_i^{(k)}\rp \x_i +\gamma\x_i^2 + \nu\log(1-n\x_i^2).
\end{eqnarray}
From (\ref{eq:excprac1})-(\ref{eq:excprac5})
 \begin{eqnarray}
 \lim_{n\rightarrow\infty} \psi_{rd,1}(\hat{\q},\hat{\c},r_x,\bar{r}_x) =  \bar{\psi}_{rd,1}(\hat{\q},\hat{\c},\hat{\gamma},\hat{\nu},r_x,\bar{r}_x),
  \label{eq:exvprac12a}
\end{eqnarray}
where
\begin{eqnarray}\label{eq:excprac13}
    \bar{\psi}_{rd,1}(\q,\c,\gamma_{sq},r_x)   & = &  \frac{1}{2}    \sum_{k=2}^{r+1}\Bigg(\Bigg.
   \q^2_{k-1}
   - \q^2_{k}
  \Bigg.\Bigg)
\c_k
-\gamma r_x^3 -\nu r_x\bar{r}_x
- \varphi(\bar{D}_1^{(bin)}(c_k(\q)),\c),
  \end{eqnarray}
  and analogously to (\ref{eq:prac14})
 \begin{align}\label{eq:excprac14}
\varphi(\bar{D}_1^{(bin)}(c_k(\q)),\c) & =
 \mE_{{\mathcal U}_{r+1}} \frac{1}{\c_r} \log
\lp \mE_{{\mathcal U}_{r}} \lp \dots \lp \mE_{{\mathcal U}_3}\lp\lp\mE_{{\mathcal U}_2} \lp
    e^{  -r_x \c_2 \bar{D}_1^{(bin)}(c_k(\q))  }  \rp\rp^{\frac{\c_3}{\c_2}}\rp\rp^{\frac{\c_4}{\c_3}} \dots \rp^{\frac{\c_{r}}{\c_{r-1}}}\rp, \nonumber \\
  \end{align}
with $\hat{\gamma}$, $\hat{\nu}$, and  the non-fixed parts of $\hat{\q}$, and  $\hat{\c}$ being the solutions of
\begin{eqnarray}\label{eq:excprac16}
    \frac{d \bar{\psi}_{rd,1}(\q,\c,\gamma,\nu,r_x,\bar{r}_x)}{d\q} & = & 0 \nonumber \\
   \frac{d \bar{\psi}_{rd,1}(\q,\c,\gamma,\nu,r_x,\bar{r}_x)}{d\c} & = & 0 \nonumber \\
   \frac{d \bar{\psi}_{rd,1}(\q,\c,\gamma,\nu,r_x,\bar{r}_x)}{d\gamma} & = & 0  \nonumber \\
   \frac{d \bar{\psi}_{rd,1}(\q,\c,\gamma,\nu,r_x,\bar{r}_x)}{d\nu} & = & 0. \end{eqnarray}
One then also observes
\begin{eqnarray}\label{eq:excprac17}
c_k(\hat{\q})  & = & \sqrt{\hat{\q}_{k-1}-\hat{\q}_k},
 \end{eqnarray}
and after connecting  (\ref{eq:excprac11}) to (\ref{eq:excprac13}) and (\ref{eq:excprac14}) obtains
 \begin{eqnarray}
\bar{\xi}(r_x,\bar{r}_x)  = \bar{f}_{csk}(\infty)
& = &  \lim_{n\rightarrow\infty}\frac{\mE_G \max_{\x\in\bar{\cX}(r_x,\bar{r}_x)} \x^TG\x}{\sqrt{2n}}  =
 -\frac{1}{\sqrt{2}}\lim_{n\rightarrow\infty} \psi_{rd,1}(\hat{\q},\hat{\c},r_x,\bar{r}_x)
 \nonumber \\
&  = &  - \frac{1}{\sqrt{2}} \bar{\psi}_{rd,1}(\hat{\q},\hat{\c},\hat{\gamma},\hat{\nu},r_x,\bar{r}_x) \nonumber \\
 & = & \frac{1}{\sqrt{2}}\lp  -\frac{1}{2}    \sum_{k=2}^{r+1}\Bigg(\Bigg.
    \hat{\q}^2_{k-1}
   - \hat{\q}^2_{k}
  \Bigg.\Bigg)
\hat{\c}_k
 + \hat{\gamma} r_x^3 + \hat{\nu}r_x\bar{r}_x
  + \varphi(\bar{D}_1^{(bin)}(c_k(\hat{\q})),\c) \rp. \nonumber \\
  \label{eq:excprac18}
\end{eqnarray}
We summarize the above results in the following theorem.

\begin{theorem}
  \label{thme:thmprac2}
Assume the setup of of Theorem \ref{thme:thmprac1}  with $\varphi(\cdot)$ and $\bar{\psi}_{rd,1}(\cdot)$  as in (\ref{eq:excprac14}) and (\ref{eq:excprac13}), respectively. Let the ``fixed'' parts of $\hat{\q}$, and $\hat{\c}$ be $\hat{\q}_1\rightarrow 1$, $\hat{\c}_1\rightarrow 1$, $\hat{\q}_{r+1}=\hat{\c}_{r+1}=0$. Also, let $\hat{\gamma}$, $\hat{\nu}$, and the ``non-fixed'' parts of $\hat{\q}_k$, and $\hat{\c}_k$ ($k\in\{2,3,\dots,r\}$) be the solutions of (\ref{eq:excprac16}). For $c_k(\hat{\q})$  as in (\ref{eq:excprac17}), one has
 \begin{eqnarray}
\bar{\xi}(r_x,\bar{r}_x) \triangleq \bar{f}_{csk}(\infty)
& = &   \frac{1}{\sqrt{2}} \lp -\frac{1}{2}    \sum_{k=2}^{r+1}\Bigg(\Bigg.
    \hat{\q}^2_{k-1}
   - \hat{\q}^2_{k}
  \Bigg.\Bigg)
\hat{\c}_k
+ \hat{\gamma} r_x^3 + \hat{\nu} r_x \bar{r}_x
  + \varphi(\bar{D}_1^{(bin)}(c_k(\hat{\q})),\hat{\c})   \rp. \nonumber \\
  \label{eq:excthmprac1eq1}
\end{eqnarray}
\end{theorem}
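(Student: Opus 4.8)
The plan is to follow verbatim the derivation already carried out in Section~\ref{sec:randlincons} for the CLuP-SK model, simply replacing the constraint set $\cX(r_x)$ by the augmented set $\bar{\cX}(r_x,\bar{r}_x)$ and tracking the one extra Lagrange multiplier $\nu$ that the additional constraint $\frac{1}{n}\sum_{i=1}^{n}\log(1-n\x_i^2)=\bar{r}_x$ introduces. Concretely, first I would invoke Theorem~\ref{thm:thmsflrdt1} with $f(\x)=0$ and with the set $\bar{\cX}(r_x,\bar{r}_x)$ in place of $\cX(r_x)$; this is legitimate because $\bar{\cX}(r_x,\bar{r}_x)$ is again a subset of $\mR^n$ with $\|\x\|_2=r_x$, which is all the theorem requires. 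That immediately yields the strong sfl random duality identity $\lim_{n\to\infty}\frac{\mE_G\psi_{rp}}{\sqrt{n}}=\lim_{n\to\infty}\psi_{rd}(\hat{\q},\hat{\c},r_x)$ with the $\psi_{S,\infty}$ functional now built over $\bar{\cX}(r_x,\bar{r}_x)$, giving (\ref{eq:excprac11}) after dividing by $\sqrt{2}$.

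The second step is to evaluate the inner optimization $\bar{D}^{(bin)}(r_x,\bar{r}_x)$ of (\ref{eq:excprac3}). Here the only new ingredient relative to (\ref{eq:prac3})--(\ref{eq:prac5}) is that maximizing $r_x\sqrt{2n}(\sum_k c_k\h^{(k)})^T\x$ over $\bar{\cX}(r_x,\bar{r}_x)$ is a constrained problem with two scalar constraints ($\|\x\|_2=r_x$ and the log-constraint). Introducing Lagrange multipliers $\gamma$ and $\nu$ (with the stated scaling $\gamma\sim\gamma\sqrt{n}$, $\nu\sim\nu\sqrt{n}$ so that the objective is extensive) decouples the maximization across coordinates, and the per-coordinate problem is exactly (\ref{eq:excprac5}). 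Strong duality for this finite-dimensional problem (the objective is concave on the relevant region, or one simply appeals to the standard Lagrangian characterization used throughout the fl~RDT literature) turns the $\max$ into $\min_{\gamma,\nu}$ of a sum of single-variable terms, which is (\ref{eq:excprac4}). Substituting this into the $\psi_{S,\infty}$ nesting and carrying out the $n\to\infty$ concentration, exactly as in the passage from (\ref{eq:prac12}) to (\ref{eq:prac18}), produces $\bar{\psi}_{rd,1}$ of (\ref{eq:excprac13}), the stationarity system (\ref{eq:excprac16}), and finally the closed-form expression (\ref{eq:excprac18}). Collecting these gives the claimed identity (\ref{eq:excthmprac1eq1}).

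The one genuinely new technical point — and the place I would be most careful — is justifying the interchange of the outer $\max$ over $\bar{\cX}(r_x,\bar{r}_x)$ with the Lagrangian $\min$ over $(\gamma,\nu)$, i.e.\ that the additional nonlinear constraint does not break the strong-duality step that (\ref{eq:prac4}) used for the simpler box constraint. For the bare set $\cX(r_x)$ only the box constraints $\x_i^2\le 1/n$ were present and they are convex, so the per-coordinate reduction was transparent; with the equality constraint $\frac{1}{n}\sum_i\log(1-n\x_i^2)=\bar{r}_x$ one must check that the relevant feasible region is nonempty for the $(r_x,\bar{r}_x)$ of interest, that the constraint qualification holds so the Lagrange multipliers $\hat\gamma,\hat\nu$ exist, and that the resulting single-variable functions in (\ref{eq:excprac5}) are well-behaved (so that the $\min_{\gamma,\nu}$ is attained and the concentration goes through). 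This is a \emph{numerical} rather than conceptual subtlety, and is why the statement is flagged in the text as an \emph{exact} but numerically-based characterization; everything else is a line-by-line transcription of the CLuP-SK derivation with $b_k\to c_k$, $\y\to\x$, and the extra multiplier $\nu$ carried along. Hence the proof is as follows.

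\begin{proof}
The argument repeats, essentially verbatim, the derivation of Theorem~\ref{thme:thmprac1}, with $\cX(r_x)$ replaced by $\bar{\cX}(r_x,\bar{r}_x)$. Applying Theorem~\ref{thm:thmsflrdt1} with $f(\x)=0$ and the set $\bar{\cX}(r_x,\bar{r}_x)\subseteq\mR^n$ gives the strong sfl random duality identity, which after normalization by $\sqrt{2}$ is (\ref{eq:excprac11}) with $\psi_{rd,1}$ as in (\ref{eq:excprac1}). The inner linear maximization $\bar D^{(bin)}(r_x,\bar r_x)$ of (\ref{eq:excprac3}) is handled by introducing Lagrange multipliers $\gamma$ and $\nu$ for the two scalar constraints defining $\bar{\cX}(r_x,\bar{r}_x)$; with the scaling $\gamma\sim\gamma\sqrt{n}$, $\nu\sim\nu\sqrt{n}$ the Lagrangian separates across coordinates, the per-coordinate contribution is (\ref{eq:excprac5}), and strong duality yields (\ref{eq:excprac4}). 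Substituting into the nested $\psi_{S,\infty}$ functional and taking $n\to\infty$ with the usual concentration, exactly as in the passage (\ref{eq:prac12})--(\ref{eq:prac18}) but now also extremizing over $\nu$, produces $\bar\psi_{rd,1}$ of (\ref{eq:excprac13}), the relation $\varphi(\bar D_1^{(bin)}(c_k(\q)),\c)$ of (\ref{eq:excprac14}), the stationarity conditions (\ref{eq:excprac16}), and the identity $c_k(\hat\q)=\sqrt{\hat\q_{k-1}-\hat\q_k}$ of (\ref{eq:excprac17}). Combining these with (\ref{eq:excprac11}) gives the closed form (\ref{eq:excprac18}), which is precisely (\ref{eq:excthmprac1eq1}).
\end{proof}
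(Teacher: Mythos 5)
Your proposal follows precisely the paper's own route: invoke Theorem~\ref{thm:thmsflrdt1} with $f(\x)=0$ on the set $\bar{\cX}(r_x,\bar{r}_x)$, dualize the extra logarithmic constraint with the multiplier $\nu$ (alongside $\gamma$ for the norm), and then repeat the passage (\ref{eq:excprac11})--(\ref{eq:excprac18}) culminating in (\ref{eq:excthmprac1eq1}) -- which is exactly what the paper means by ``Follows from the above discussion, Theorems~\ref{thm:thmsflrdt1} and~\ref{thme:thmprac1}, and the sfl RDT machinery.'' Your additional remarks about feasibility, constraint qualification, and the scaling $\gamma,\nu\sim\sqrt{n}$ are a sensible flag of the one place the argument leans on the surrounding fl~RDT framework, but they do not change the substance: this is the paper's proof.
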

\begin{proof}
Follows from the above discussion, Theorems \ref{thm:thmsflrdt1} and \ref{thme:thmprac1}, and the sfl RDT machinery presented in \cite{Stojnicnflgscompyx23,Stojnicsflgscompyx23,Stojnicflrdt23}.
\end{proof}

The above theorem is in principle sufficient to analyze $\bar{f}_{b,x} \lp\x;\bar{t}_{0x}^{(t)}\rp$. We first recall
\begin{eqnarray}\label{eq:excalgimpeq6}
\bar{f}_{b,x} \lp\x; t_{0x} \rp = - t_{0x} \|\x\|_2 - \log\lp - \lp \x^T  \lp 0.9 I  -  \frac{1}{2\sqrt{2n}} \lp G^T+G\rp   \rp \x - \kappa \rp \rp
- \frac{1}{n}\sum_{i=1}^{n} \log\lp 1-n\x_i^2  \rp.
\end{eqnarray}
For a fixed $t_{0x}$ we are interested in behavior of
\begin{equation}\label{eq:excalgimpeq7}
\bar{f}_{b} \lp r_x,\bar{r}_x \rp = \min_{\x\in\bar{\cX}(r_x,\bar{r}_x)}
  - t_{0x} \|\x\|_2 - \log\lp - \lp \x^T  \lp 0.9 I  -  \frac{1}{2\sqrt{2n}} \lp G^T+G\rp   \rp \x - \kappa \rp \rp
- \frac{1}{n}\sum_{i=1}^{n} \log\lp 1-n\x_i^2  \rp .
\end{equation}
Recalling on (\ref{eq:excclupskeq1}) and (\ref{eq:exccska0limlogpartfunsqrt})
\begin{eqnarray}\label{eq:excalgimpeq8}
\bar{f}_{b} \lp r_x,\bar{r}_x \rp
& = & \min_{\x\in,\bar{\cX}(r_x,\bar{r}_x )} - t_{0x}r_x - \log \lp  -0.9r_x^2 + \bar{\xi}(r_x,\bar{r}_x ) + \kappa \rp -\bar{r}_x
\nonumber \\
& = &
 - t_{0x}r_x - \log \lp  -0.9r_x^2 + \bar{f}_{csk}(\infty)  + \kappa \rp - \bar{r}_x .
\end{eqnarray}
We set
\begin{eqnarray}\label{eq:excalgimpeq9a0a0}
\bar{r}_x^{(opt)}  \triangleq  \mbox{argmin}_{\bar{r}_x <0} \bar{f}_{b} \lp r_x,\bar{r}_x \rp,
\end{eqnarray}
and
\begin{eqnarray}\label{eq:excalgimpeq9a0}
\bar{f}_{b} \lp r_x \rp
 &  \triangleq &  \min_{\bar{r}_x <0} \bar{f}_{b} \lp r_x,\bar{r}_x \rp  =  \bar{f}_{b} \lp r_x,\bar{r}_x^{(opt)} \rp \nonumber \\
\bar{\xi} \lp r_x \rp
 &  \triangleq &    \bar{\xi} \lp r_x,\bar{r}_x^{(opt)} \rp.
\end{eqnarray}
Analogously to (\ref{eq:algimpeq9}) we now have for the optimal $r_x$
\begin{eqnarray}\label{eq:excalgimpeq9}
\hat{r}_x =  \mbox{argmin}_{r_x\in(0,1]} \bar{f}_{b} \lp r_x \rp.
\end{eqnarray}
In Figure \ref{fig:fig2a0}, we show $\frac{\bar{f}_{b} \lp r_x \rp }{t_{0x}}$ for $t_{0x}=20$. Results obtained on the third partial level of  lifting are used  for $\bar{\xi}(r_x,\bar{r}_x) = \bar{f}^{(3,p)}_{csk} (\infty) $. One observes that  $\frac{\bar{f}_{b} \lp r_x \rp }{t_{0x}}$ (and therefore $\bar{f}_{b} \lp r_x \rp$  itself as well) have no local optima different from global ones. The same phenomenological behavior  continues for any $t_{0x}$ that we tested, basically indicating presence of descending algorithms favorable landscape. Discussion from previous sections regarding the role of other intrinsic features (like OGP or local entropies) applies here as well.
\begin{figure}[h]
\centering
\centerline{\includegraphics[width=1.00\linewidth]{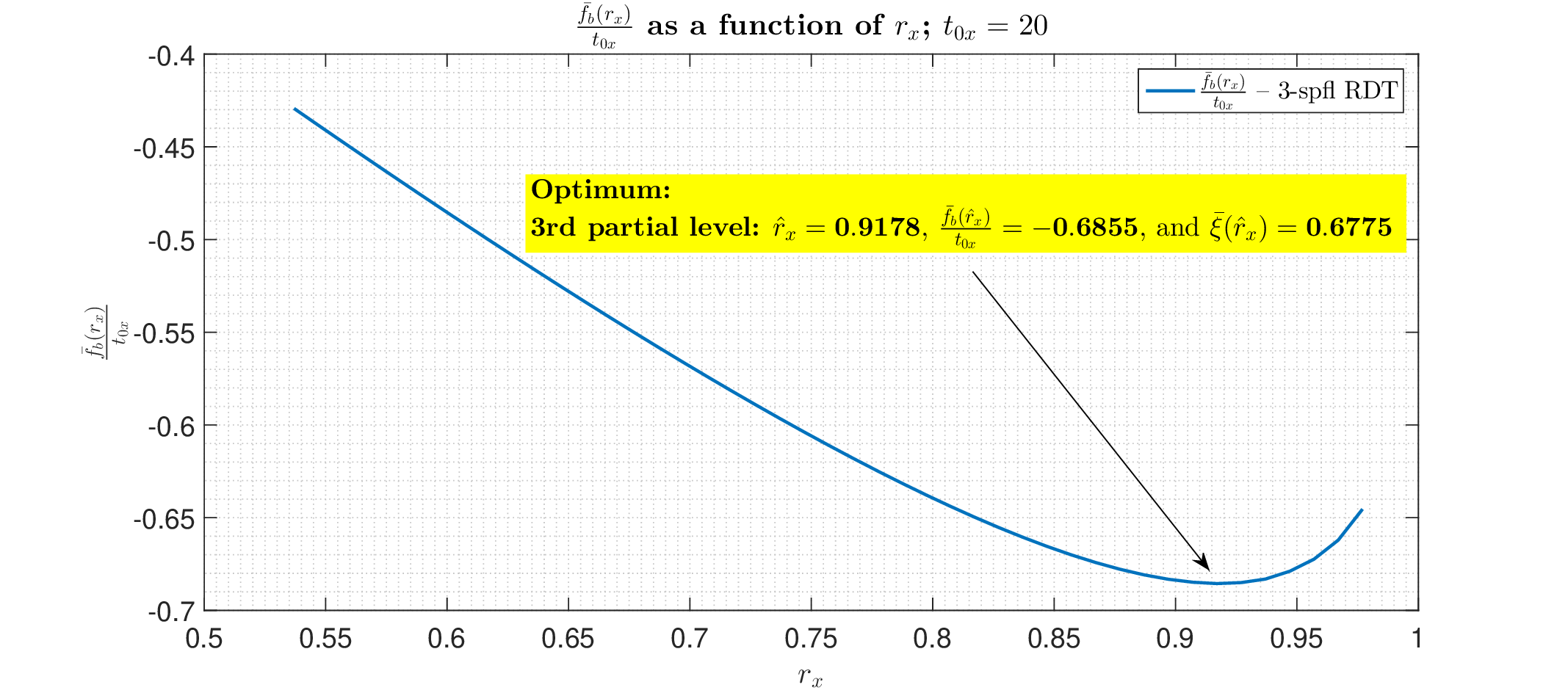}}
\caption{$\frac{\bar{f}_b(r_x)}{t_{0x}}$ as a function of $r_x$; $t_{0x}=20$ -- $\overline{\mbox{CLuP-SK}} \mbox{ model}$ }
\label{fig:fig2a0}
\end{figure}

In Figures \ref{fig:fig5a0}-\ref{fig:fig7a0} theoretical predictions for all three critical quantities, $\frac{\bar{f}_b(\hat{r}_x)}{t_{0x}}$, $\bar{\xi}(\hat{r}_x)$, and $\hat{r}_x$ are shown. As earlier, we use the third partial level of lifting so that no further visible improvement on higher lifting levels is expected. The theoretical predictions are accompanied with the simulated results obtained by running CLuP-SK for $n=200$, $n=1000$, and $n=2000$. As can be seen from figures, even though the dimensions are fairly small (compared to $n\rightarrow \infty$), the agreement between theoretical and simulated results is excellent.

\begin{figure}[h]
\centering
\centerline{\includegraphics[width=1.00\linewidth]{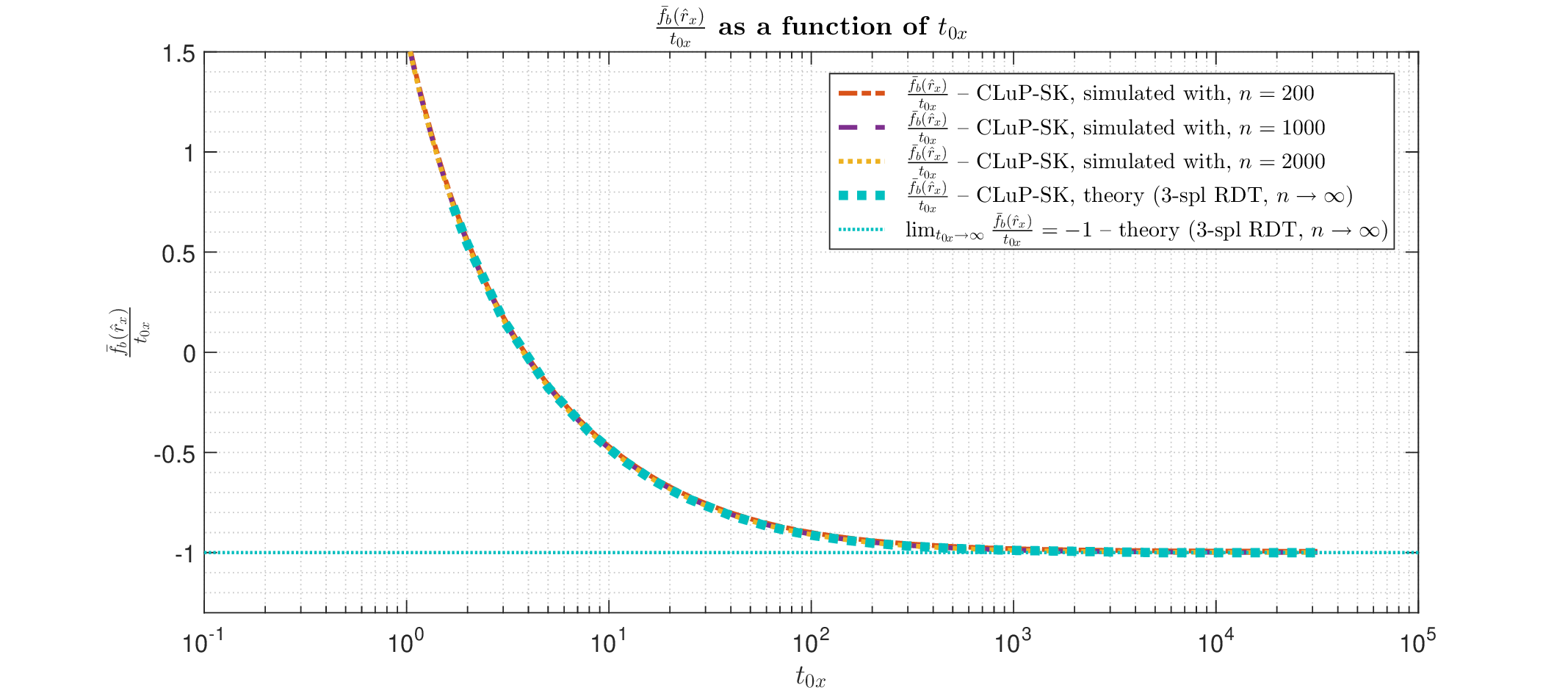}}
\caption{$\frac{\bar{f}_b(\hat{r}_x)}{t_{0x}}$ as a function of $t_{0x}$  -- $\overline{\mbox{CLuP-SK}} \mbox{ model}$ }
\label{fig:fig5a0}
\end{figure}
\begin{figure}[h]
\centering
\centerline{\includegraphics[width=1.00\linewidth]{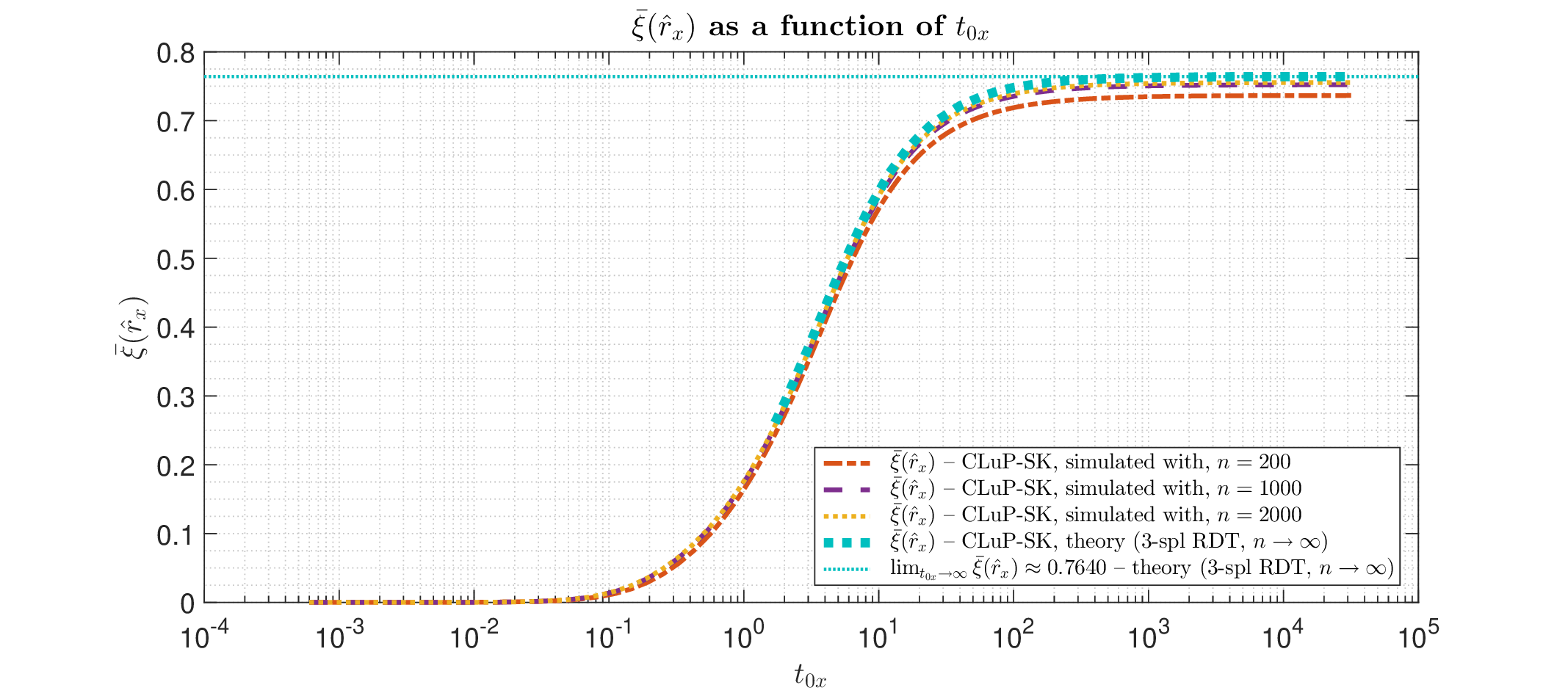}}
\caption{$\bar{\xi}(\hat{r}_x)$ as a function of $t_{0x}$ -- $\overline{\mbox{CLuP-SK}} \mbox{ model}$  }
\label{fig:fig6a0}
\end{figure}
\begin{figure}[h]
\centering
\centerline{\includegraphics[width=1.00\linewidth]{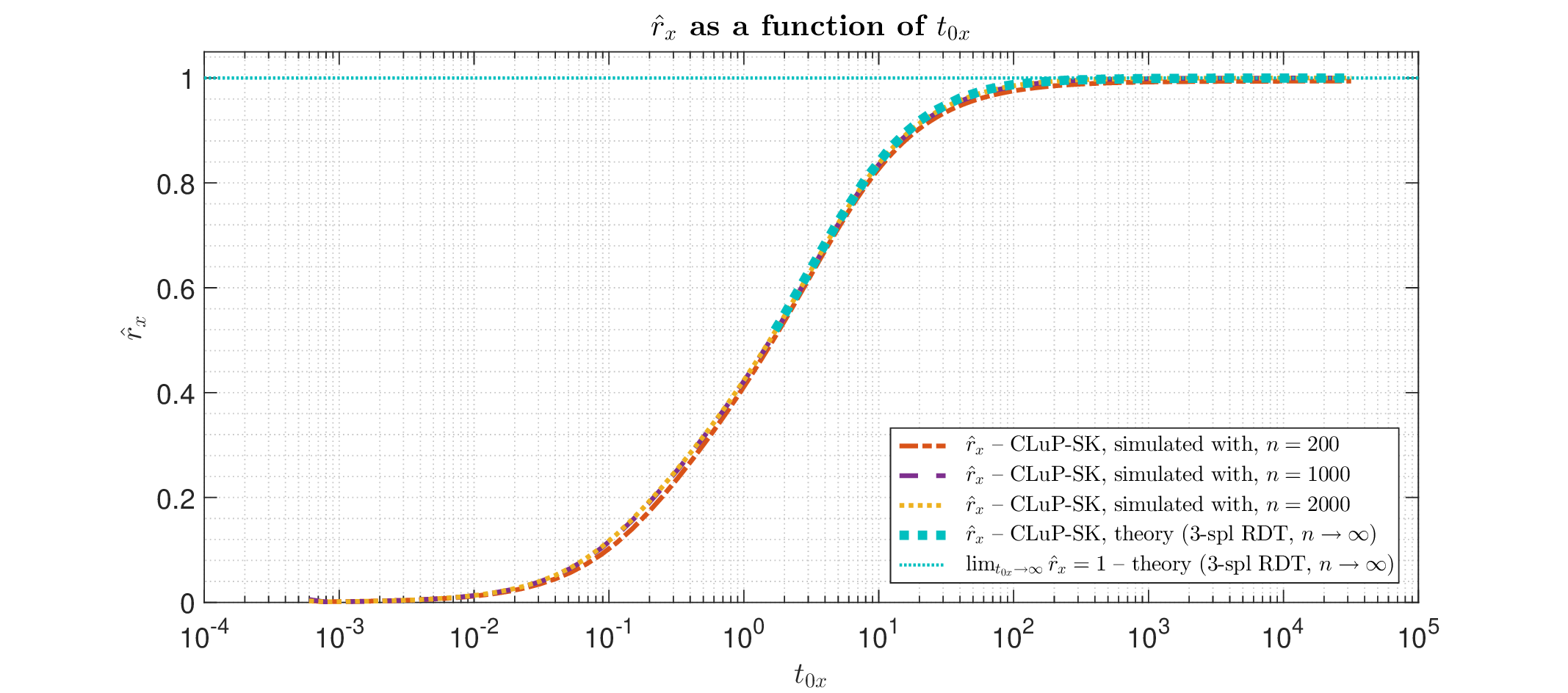}}
\caption{$\hat{r}_x$ as a function of $t_{0x}$  -- $\overline{\mbox{CLuP-SK}} \mbox{ model}$ }
\label{fig:fig7a0}
\end{figure}

In Figures \ref{fig:figconv1} and \ref{fig:figconv2} we show the effect that changing the underlying dimension $n$ has on $\bar{\xi}(\hat{r}_x)$.  Increasing $n$ from a few tens and hundreds to a few thousands  we observe at what pace the simulated CLuP-SK dynamics approaches theoretical predictions. In particular, we see that the most rapid part of the convergence process happens already for $n$ on the order of thousand.
\begin{figure}[h]
\centering
\centerline{\includegraphics[width=1.00\linewidth]{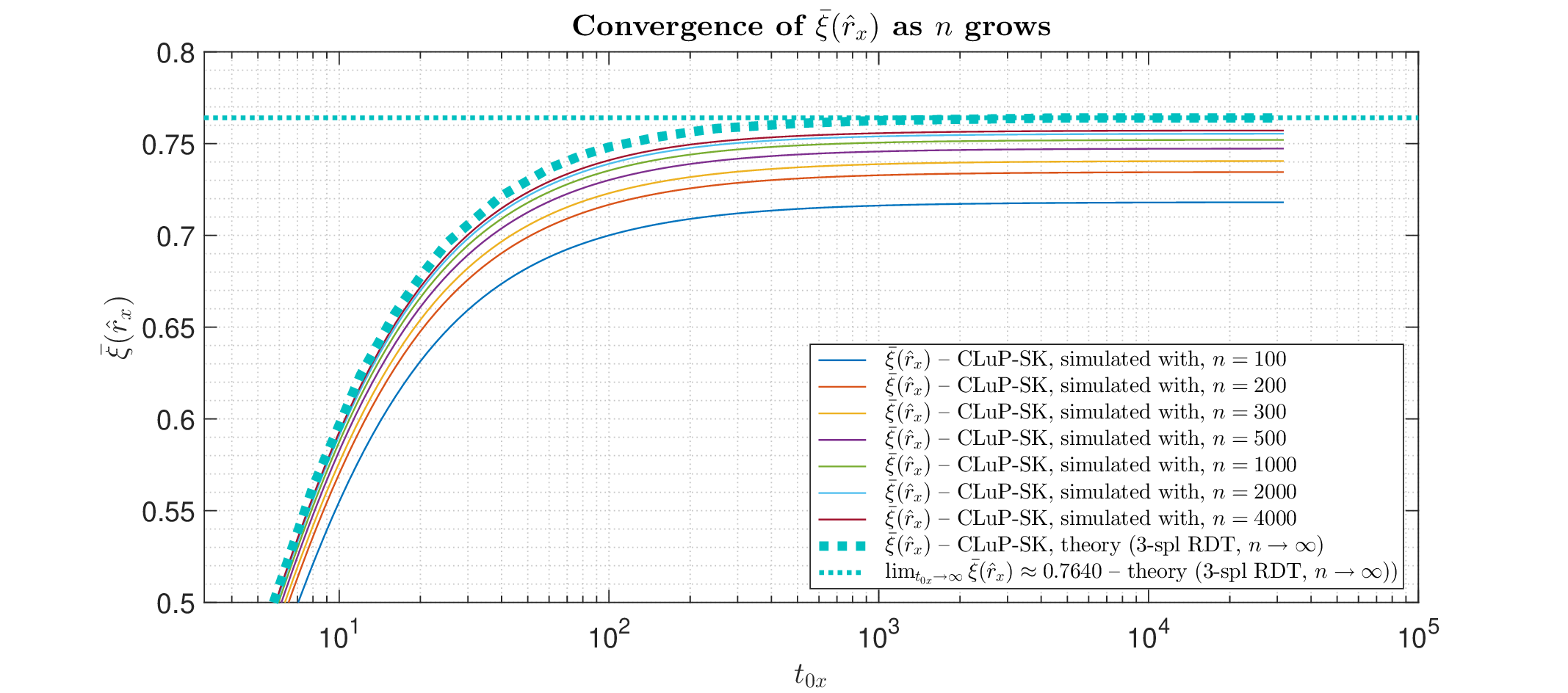}}
\caption{Convergence of $\bar{\xi}(\hat{r}_x)$ as $n$ grows -- $\overline{\mbox{CLuP-SK}} \mbox{ model}$  }
\label{fig:figconv1}
\end{figure}
\begin{figure}[h]
\centering
\centerline{\includegraphics[width=1.00\linewidth]{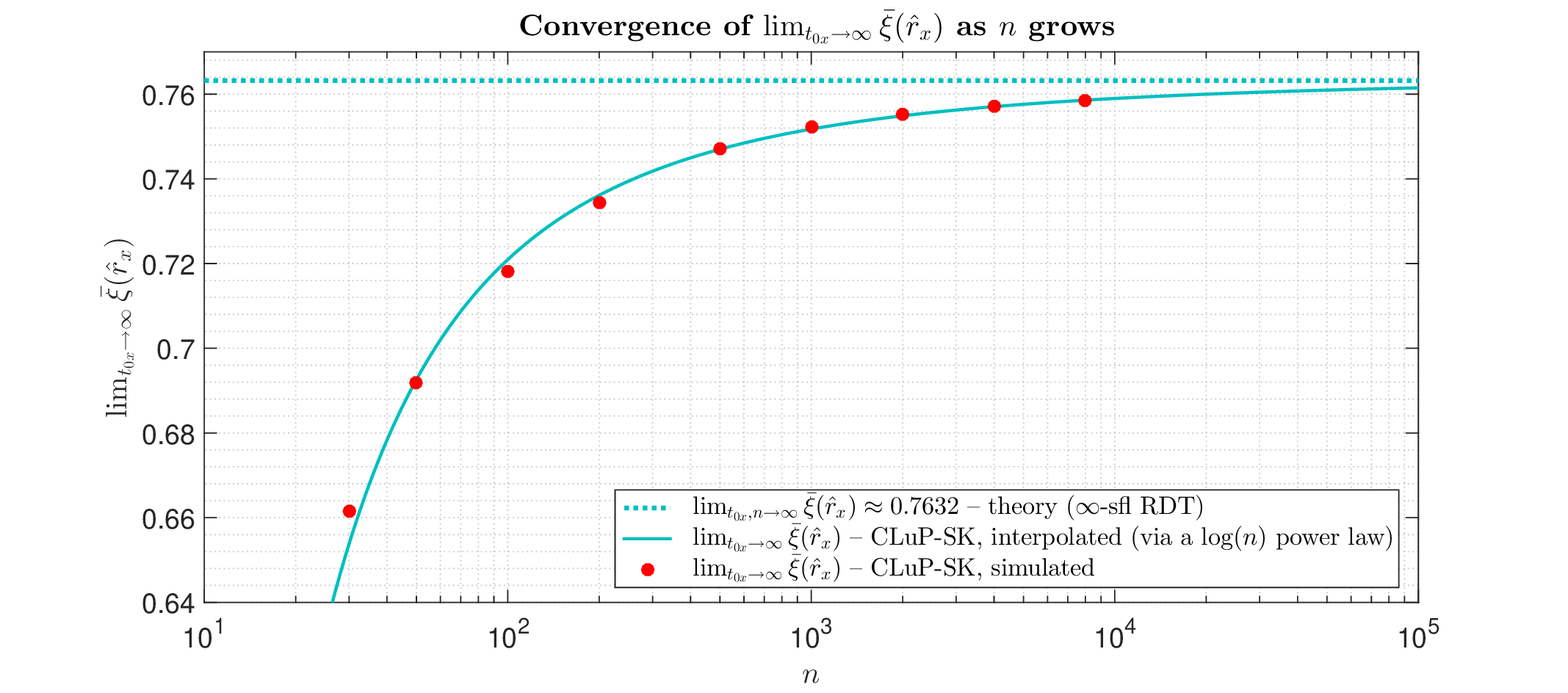}}
\caption{Convergence of $\lim_{t_{0x}\rightarrow\infty}\bar{\xi}(\hat{r}_x)$ as $n$ grows -D- $\overline{\mbox{CLuP-SK}} \mbox{ model}$  }
\label{fig:figconv2}
\end{figure}

\subsection{Overlaps and ultrametricity}
\label{sec:ovlpultmet}

Excellent performance of CLuP-SK procedure allows to simulate behavior of near optimal solutions -- spin configurations that produce close to optimal ground state free energies. One is particularly interested in the structure of overlaps, associated GIbbs measures, and ultrametricity. In the thermodynamic limit the key parts of the Gibbs measure associated with the overlaps concentrate on $\q_2,\q_3,\dots,\q_r$. In Figure \ref{fig:fig8} we show how $\q$ changes as the lifting process progresses. In particular, we associate with $\q$ and $\c$ the following
\begin{eqnarray}\label{eq:ultmet1}
\bl{\mbox{\textbf{$\q\lp \frac{\c}{\c_2}\rp $ map:}}}  \hspace{.7in} \q_2 \leftrightarrow   \left [ \frac{\c_3}{\c_2}, \frac{\c_2}{\c_2} \right ],\quad
\q_3  \leftrightarrow  \left [ \frac{\c_4}{\c_2}, \frac{\c_3}{\c_2}  \right ], \quad \dots.
\end{eqnarray}
Concrete numerical values for $\q$ and $\c$ up to 5-th partial level of lifting are given in Table \ref{tab:5rsbSK}. In parallel with $\q\lp \frac{\c}{\c_2}\rp $ map, Figure \ref{fig:fig8} also shows near optimum overlap values obtained utilizing CLuP-SK algorithmic procedure with $n=2000$. It is interesting to note that even though we focused only on configurations that are very close to the optimum, already on the fifth level of lifting the $\q\lp \frac{\c}{\c_2}\rp $ Gibbs measure cdf to a large degree matches the overlap distribution obtained through simulations.

For the completeness we also include the cdf predictions obtained via replica methods. As is well known, the original replica symmetry breaking (RSB) structure was invented by Parisi in \cite{Par79,Par80,Par83,Parisi80}. Early numerical considerations included first two steps of breaking as demonstrations of the overall RSB power. Moreover, for a majority of key quantities that describe the SK model behavior (including the most popular ground state free energy), accuracy achieved after the first two RSB steps for all practical purposes sufficed. However, the cdf of the Gibbs measure is a very important exception where obtaining even remotely accurate results requires a fairly high number of RSB steps. Consequently, evaluations via small number of RSB steps were soon abandoned and replaced by  different alternatives. Continuous domain formulations via differential equations became particularly popular. An excellent set of results in this direction was obtained in \cite{CrisRizo02} where  $0.76321\pm 0.00003$ was obtained as the SK-model ground state free energy (this closely matched Parisi's original $0.7633\pm0.0001$). Interestingly and somewhat paradoxically,  \cite{OppSS07,OppSch08,OppShe05,SchOpp08} made a switch back and used original discrete (finite number of steps) RSB formulation to obtain approximate predictions up to $200$-RSB steps. In particular, \cite{OppSch08,OppSS07} gave $\approx 0.76317$ as the SK's ground state free energy. We found as particularly simple and elegant the $\infty$-RSB  approximation $\q_{app}(\c)=\frac{\sqrt{pi}}{2}\frac{\c}{\xi_{app}}\erf\lp \frac{\xi_{app}}{\c}\rp $ with $\xi_{app} \approx 1.13\approx \frac{2}{\sqrt{\pi}}$ given in \cite{OppShe05}. As demonstrated on a multitude of occasions in \cite{OppSS07,OppSch08,SchOpp08}, despite its simplicity, $\q_{app}(\c)$ very closely approximates the above mentioned $200$-RSB predictions. Its a normalized variant (to account for $\frac{\c}{\c_2}$ scaling) is shown in Figure \ref{fig:fig8} as well.

In Figure \ref{fig:fig9} we show the Gram matrix of overlaps obtained via the same CLuP-SK algorithm that we used above. Even for fairly small $n=2000$ and with a focus solely on the configurations that produce values very close to the optimum, one observes emergence of a beautiful ultrametric structure, precisely as the theory predicts.

\begin{figure}[h]
\centering
\centerline{\includegraphics[width=1.00\linewidth]{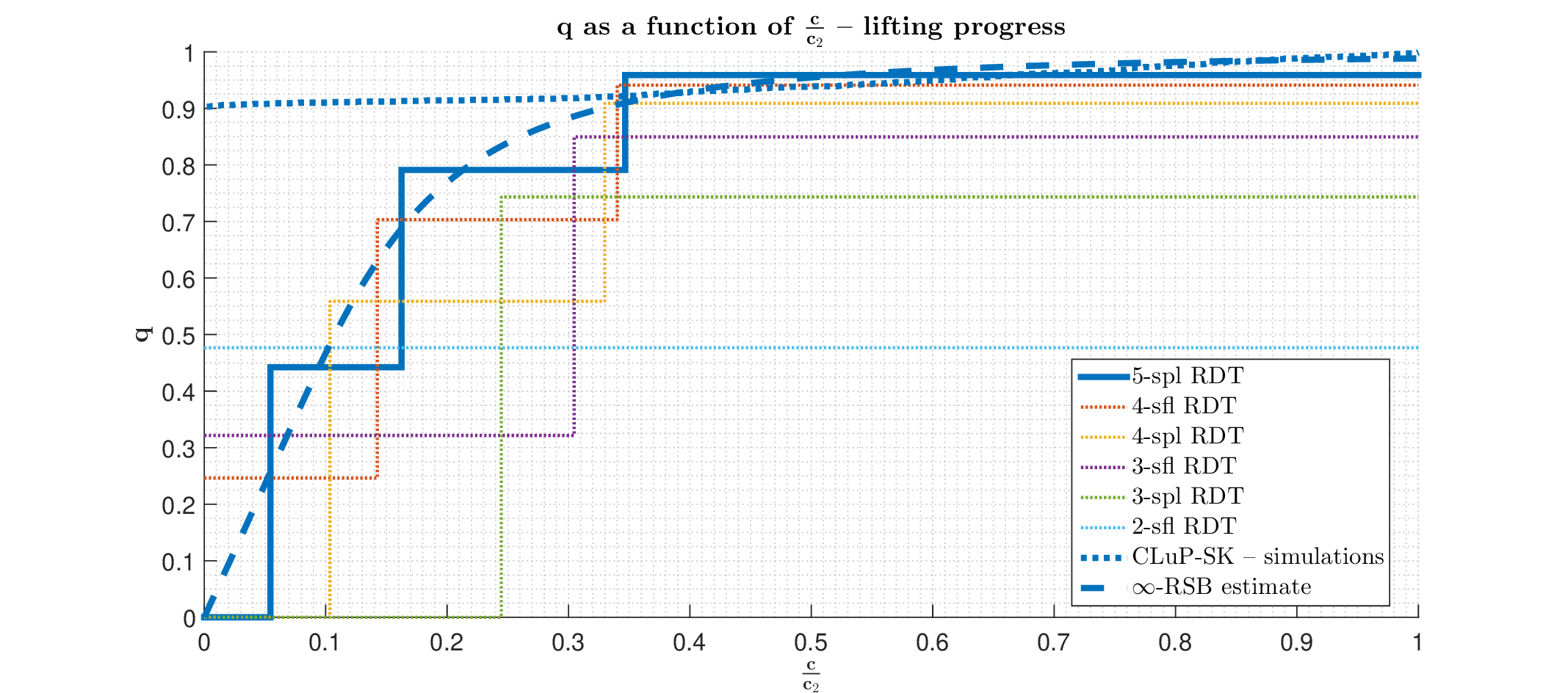} }
\caption{$\q$ as a function of $\frac{\c}{\c_2}$}
\label{fig:fig8}
\end{figure}

\begin{figure}[h]
\centering
\centerline{\includegraphics[width=.80\linewidth]{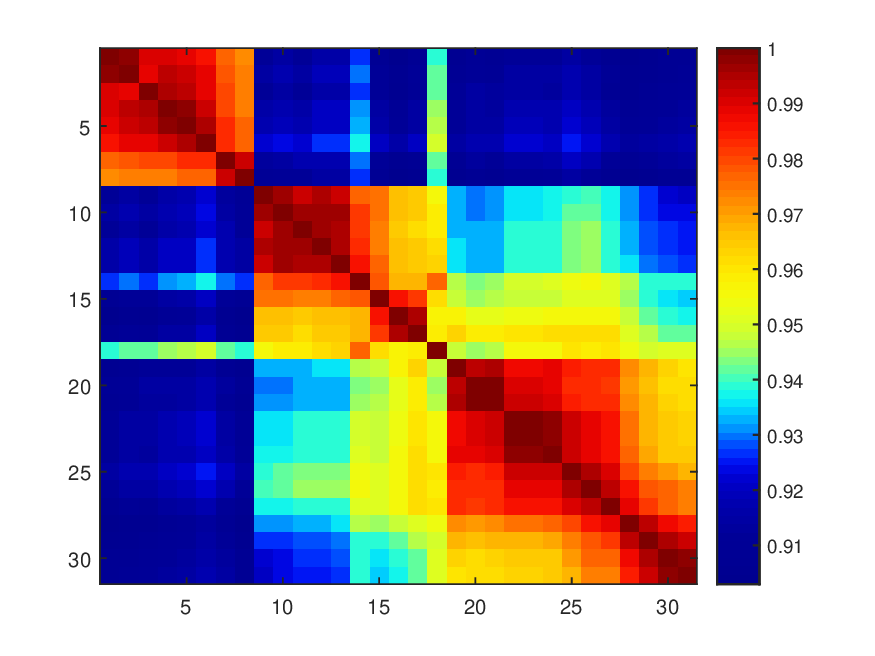}}
\caption{SK-model - Gram matrix of overlaps near the optimum -- emergence of ultrametricity}
\label{fig:fig9}
\end{figure}

\begin{table}[h]
\caption{$r$-sfl RDT parameters; CLUP SK model; $r_x=1$;  $\hat{\c}_1\rightarrow 1$; $n,\beta\rightarrow\infty$}\vspace{.1in}
\centering
\def\arraystretch{1.2}
\begin{tabular}{||l||c||c|c|c|c||c|c|c|c||c||}\hline\hline
 \hspace{-0in}$r$-sfl RDT                                             & $\hat{\gamma}$  &   $\hat{\q}_4$  & $\hat{\q}_3$  & $\hat{\q}_2$  & $\hat{\q}_1$ &  $\hat{\c}_5$   &  $\hat{\c}_4$   &   $\hat{\c}_3$   &   $\hat{\c}_2$    & $f_{csk}^{(r)}(\infty)$  \\ \hline\hline
$\mathbf{1}$ (full)                                      & $0$ & $0$ & $0$ & $0$ & $\rightarrow 1$ &  $\rightarrow 0$
 &  $\rightarrow 0$ &  $\rightarrow 0$  &  $\rightarrow 0$  & \bl{$\mathbf{0.79788}$} \\ \hline\hline
$\mathbf{2}$ (partial)                                      & $0$ & $0$ & $0$ & $0$ & $\rightarrow 1$ &  $\rightarrow 0$
 &  $\rightarrow 0$ &  $\rightarrow 0$  &  $   0.5779 $  & \bl{$\mathbf{0.76883}$} \\ \hline
   $\mathbf{2}$ (full)                                      & $0$  & $0$ & $0$ & $0.4768$ & $\rightarrow 1$ &  $\rightarrow 0$
 &  $\rightarrow 0$ &  $\rightarrow 0$
 &  $0.9623$   & \bl{$\mathbf{0.76526}$}  \\ \hline\hline
 $\mathbf{3}$ (partial)                                      & $0$ & $0$  & $0$ & $0.7434$ & $\rightarrow 1$ &  $\rightarrow 0$
 &  $\rightarrow 0$ &  $0.3569$
 &  $1.4586$   & \bl{$\mathbf{0.76403}$}   \\ \hline
 $\mathbf{3}$ (full)                                      & $0$ & $0$  & $0.3215$ & $0.8496$ & $\rightarrow 1$ &  $\rightarrow 0$ &  $\rightarrow 0$
&  $0.5906$
 &  $ 1.9386$   & \bl{$\mathbf{0.76361}$}   \\ \hline\hline
 $\mathbf{4}$ (partial)                                      & $0$ & $0$  & $ 0.5587$ & $ 0.9088$ & $\rightarrow 1$ &  $\rightarrow 0$
 &  $0.2599$ &  $  0.8280$
 &  $ 2.5103$   & \bl{$\mathbf{0.76341}$}   \\ \hline
 $\mathbf{4}$ (full)                                      & $0$ & $ 0.2462$  & $ 0.7031 $ & $0.9410$ & $\rightarrow 1$ &  $\rightarrow 0$
 &  $0.4469$ &  $1.0663$
 &  $   3.1345$   & \bl{$\mathbf{0.76331}$}   \\ \hline\hline
 $\mathbf{5}$ (partial)                                      & $0$ & $  0.4421$  & $ 0.7912 $ & $ 0.9588$ & $\rightarrow 1$  & $0.2048 $  &  $0.6104$ &  $1.3026$
 &  $ 3.7571$   & \bl{$\mathbf{0.76326}$}   \\ \hline\hline
\end{tabular}
\label{tab:5rsbSK}
\end{table}

\section{Conclusion}
\label{sec:conc}

We studied the algorithmic aspects of  Sherrington-Kirkpatrick (SK) spin glass model. Within the \emph{worst case} centered classical NP complexity theory the SK model's ground state free energy is hard to approximate within a $\log(n)^{const.}$ factor. On the other hand, exploiting the SK's random nature polynomial spectral methods \emph{typically} approach the optimum within a constant factor. Design of efficient optimization procedures with approximability factor arbitrarily close to 1 is a key algorithmic imperative.

To address such a challenge we devised a \emph{Controlled Loosening-up}  CLuP-SK algorithmic procedure. Associating to it a  (random) CLuP-SK and $\overline{\mbox{CLuP-SK}}$  models and utilizing fully lifted random duality theory (fl RDT) \cite{Stojnicflrdt23}, we developed a generic framework to characterize the algorithm's performance. Extensive numerical experiments demonstrated  an excellent agrement between the algorithm's practical behavior and the corresponding  theoretical predictions. Most notably, already for $n$ on the order of few thousands  CLuP-SK achieves $\sim 0.76$ ground state free energy (which remarkably closely approaches theoretical $n\rightarrow\infty$ limit $\approx 0.763$).

Generic nature of the introduced concepts  ensures that various generalizations and extensions can be done as well. Development of analogous algorithms for various models discussed in \cite{Stojnictcmspnncapliftedrdt23,Stojnicnflgscompyx23,Stojnicsflgscompyx23,Stojnicflrdt23} presents just a small subset of possibilities. These extensions usually require a bit of problem specific adjustment and we discuss them in separate papers.

\begin{singlespace}
\bibliographystyle{plain}
\bibliography{nflgscompyxRefs}
\end{singlespace}

\end{document}